\newcommand{\tikzcircle}[2][red,fill=red]{\tikz[baseline=-0.5ex]\draw[#1,radius=#2] (0,0) circle ;}%
\newtheorem{observation}[definition]{Observation}
\newtheorem{prp}[theorem]{Property}
\renewcommand{\orcidID}[1]{\href{https://orcid.org/#1}{\includegraphics[scale=.03]{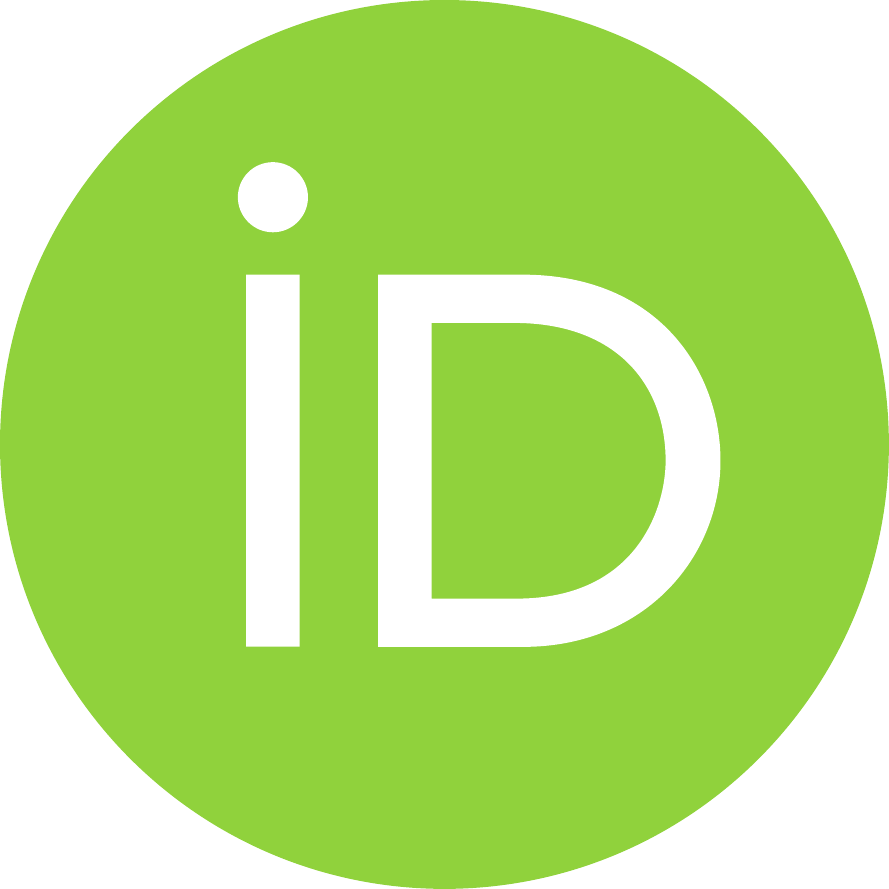}}}
\newcommand{\Oh}{{\ensuremath{\mathcal{O}}}}
\newcommand{\Sh}{{\ensuremath{\mathcal{S}}}}
\newcommand{\Qh}{{\ensuremath{\mathcal{Q}}}}
\newcommand{\grid}{\ensuremath{H}}
\DeclareMathOperator{\sn}{sn}
\DeclareMathOperator{\qn}{qn}
\title{The Mixed Page Number of Graphs}
\titlerunning{The Mixed Page Number of Graphs}
\author{Jawaherul~Md.~Alam\inst{1}\orcidID{0000-0002-7062-6792} \and 
Michael~A.~Bekos\inst{2}\orcidID{0000-0002-3414-7444} \and 
Martin~Gronemann\inst{3}\orcidID{0000-0003-2565-090X} \and\newline
Michael~Kaufmann\inst{2}\orcidID{0000-0001-9186-3538} \and 
Sergey~Pupyrev\inst{4}\orcidID{0000-0003-4089-673X}}
\authorrunning{Alam et al.}
\institute{
Amazon Inc., Tempe, AZ, USA
\\\email{jawaherul@gmail.com}
\and
Institut f{\"u}r Informatik, Universit{\"a}t T{\"u}bingen, T{\"u}bingen, Germany
\\\email{\{bekos,mk\}@informatik.uni-tuebingen.de}
\and
Theoretical Computer Science, Osnabr\"uck University, Osnabr\"uck, Germany
\\\email{martin.gronemann@uni-osnabrueck.de}
\and
Facebook, Inc., Menlo Park, CA, USA
\\\email{spupyrev@gmail.com}
}
\begin{document}
\maketitle

\begin{abstract}
A linear layout of a graph typically consists of a total vertex order, and a partition of the edges into sets of either non-crossing edges, called stacks, or non-nested edges, called queues. The stack (queue) number of a graph is the minimum number of required stacks (queues) in a linear layout. Mixed linear layouts combine these layouts by allowing each set of edges to form either a stack or a queue.
In this work we initiate the study of the \emph{mixed page number} of a graph which corresponds to the minimum number of such sets.

First, we study the edge density of graphs with bounded mixed page number. Then, we focus on complete and complete bipartite graphs, for which we derive lower and upper bounds on their mixed page number. Our findings indicate that combining stacks and queues is more powerful in various ways compared to the two traditional layouts.  
\keywords{Linear layouts  \and Mixed page number \and Stacks and queues}
\end{abstract}

\section{Introduction}
Linear layouts of graphs form an important research topic, which has been studied in different areas and under different perspectives. As a matter of fact, several combinatorial optimization problems are defined by means of a measure over a linear layout of a graph  (including the well-known cutwidth~\cite{doi:10.1137/0125042}, bandwidth~\cite{DBLP:journals/jgt/ChinnCDG82} and pathwidth~\cite{DBLP:journals/jct/RobertsonS83}). As a result, the corresponding literature is rather rich;~see~\cite{DBLP:journals/eatcs/SernaT05}. Typically, a linear layout of a graph is defined by a total order of its vertices together with an objective over its edges that one seeks to optimize. 

In this work, we focus on linear layouts in which the edges must be partitioned into a minimum number of sets, called \emph{pages}, such that each page in the partition has a certain property~\cite{DBLP:journals/dmtcs/DujmovicW04}. The most prominent representatives in this category are the \emph{stack layouts} (also known as \emph{book embeddings}) and the \emph{queue layouts}. The former do not allow two edges in the same page (called \emph{stack}, in this context)  to cross~\cite{DBLP:journals/jct/BernhartK79}, while in the latter no two edges of the same page (called \emph{queue}, in~this context) can nest~\cite{DBLP:journals/siamdm/HeathLR92}; see~\cref{fig:layouts-samples}. In other words, the endpoints~of~the edges of the same stack (queue) follow the LIFO (FIFO) principle in the underlying~order. 

Given a graph, its \emph{stack} (\emph{queue}) \emph{number}  is defined as the minimum number of stacks (queues) in an stack (queue) layout of the graph. Both of these graph parameters have been extensively studied; see~\cite{DBLP:journals/jacm/DujmovicJMMUW20,DBLP:journals/dam/GanleyH01,DBLP:journals/siamdm/HeathLR92,DBLP:journals/combinatorics/Wiechert17,DBLP:journals/jcss/Yannakakis89}. For example, it is known that both the stack  and the queue number of the $n$-vertex complete graph, $K_n$, is $\lfloor \frac{n}{2} \rfloor$~\cite{DBLP:journals/jct/BernhartK79,DBLP:journals/siamcomp/HeathR92}. However, for several families of graphs, the exact stack  or queue number is unknown. This motivates the study of corresponding upper and lower bounds for these parameters. The most notable example in this category~regards the stack number of the complete bipartite graph $K_{n,n}$; the best-known lower bound is  $\lfloor \frac{n}{2} \rfloor$~\cite{DBLP:journals/jct/BernhartK79}, while the corresponding best upper bound is $\lfloor \frac{2n}{3} \rfloor + 1$~\cite{DBLP:journals/jct/EnomotoNO97}. On the other hand, the queue number of $K_{n,n}$ is exactly $\lfloor \frac{n}{2} \rfloor$~\cite{DBLP:journals/siamcomp/HeathR92}. Another notable example is the queue number of planar graphs; the best-known lower bound is  $4$~\cite{DBLP:journals/algorithmica/AlamBGKP20}, while the corresponding best upper bound is $49$~\cite{DBLP:journals/jacm/DujmovicJMMUW20}. Note that the stack number of the planar graphs was recently shown to be exactly~$4$~\cite{DBLP:journals/jocg/KaufmannBKPRU20,DBLP:journals/jcss/Yannakakis89}.

\begin{figure}[t!]
	\centering	
	\begin{subfigure}[b]{.32\textwidth}
		\centering
		\includegraphics[scale=1,page=1]{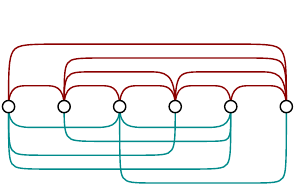}
		\caption{}
		\label{fig:mixed-sample}
	\end{subfigure}
	\hfil
	\begin{subfigure}[b]{.32\textwidth}
		\centering
		\includegraphics[scale=1,page=3]{figures/K6intro}
		\caption{}
		\label{fig:stack-sample}
	\end{subfigure}
	\hfil
	\begin{subfigure}[b]{.32\textwidth}
		\centering
		\includegraphics[scale=1,page=2]{figures/K6intro}
		\caption{}
		\label{fig:queue-sample}
	\end{subfigure}
	\caption{Different layouts of $K_6$: 
	(a)~$1$-stack $1$-queue, 
	(b)~$3$-stack, and 
	(c)~$3$-queue.}
	\label{fig:layouts-samples}
\end{figure}

A natural generalization of stack and queue layouts was introduced back in 1992 by Heath and Rosenberg~\cite{DBLP:journals/siamcomp/HeathR92}, who proposed the study of $s$-stack $q$-queue layouts that consist of $s$ stacks and $q$ queues. In the seminal paper~\cite{DBLP:journals/siamcomp/HeathR92}, they conjectured that every planar graph admits a $1$-stack $1$-queue layout. This conjecture, however, was settled in the negative by Pupyrev~\cite{DBLP:conf/gd/Pupyrev17} in 2017, while more recently Bekos et al.~\cite{DBLP:conf/gd/AngeliniBKM20} showed that the conjecture does not hold even for series-parallel graphs. 
Dujmovi{\'c} and Wood~\cite{DBLP:journals/dmtcs/DujmovicW05}, and Enomoto and Miyauchi~\cite{EM14} showed that a graph $G$ admits a $1$-stack $1$-queue linear layout, when each edge can be subdivided $\Oh(\log \sn)$ or $\Oh(\log \qn)$ times, where $\sn$ and $\qn$ denote the stack and queue numbers of $G$, respectively. For the case of planar graphs, Pupyrev~\cite{DBLP:conf/gd/Pupyrev17} showed that one subdivision vertex per edge is sufficient to guarantee a $1$-stack $1$-queue layout.
Even though the problem of studying mixed linear layouts has been highlighted in several works~\cite{DBLP:journals/siamcomp/BekosFGMMRU19,DBLP:journals/jacm/DujmovicJMMUW20,DBLP:journals/jocg/KaufmannBKPRU20}, the corresponding literature is still limited and ends with another hardness result by de~Col et al.~\cite{DBLP:conf/gd/ColKN19}, who proved that determining whether a graph admits a $2$-stack $1$-queue layout~is~\NP-complete.

Despite these mostly-negative results, one still naturally expects a reduction on the total number of pages (i.e., stacks or queues) required in a mixed layout with respect to the corresponding ones required in pure stack or queue layouts. This expectation, however, has not been confirmed so far, mainly due to the different nature of stacks and queues, which makes them difficult~to~be~combined.    

\medskip\noindent\textbf{Our contribution.} In this work, we confirm the aforementioned expectation. To achieve this, we introduce and study a new graph parameter, called \emph{mixed page number}, which equals to the minimum value of $s+q$ for which an $s$-stack $q$-queue layout for the graph exists. Our contribution is as follows:
\begin{itemize}[--]
\item In \cref{sec:properties}, we study properties of graphs admitting $s$-stack $q$-queue layouts, including their edge density and the complexity of their recognition. 
\item In \cref{sec:complete-graphs}, we show that the mixed page number of the $n$-vertex complete graph $K_n$ is at least $\lceil\frac{3(n-4)}{8}\rceil$ and at most $2 \lceil \frac{n}{5} \rceil$, which implies that its mixed page number is strictly less than its stack and queue number (e.g.,~the stack and queue number of $K_6$ is $3$, while its mixed page number is $2$; see~\cref{fig:layouts-samples}).
\item In \cref{sec:complete-bipartite-graphs}, we focus on the complete bipartite graph $K_{n,n}$. We prove that its mixed page number is  $\lceil \frac{2n}{3} \rceil$ in the special case, in which~all vertices of one of its parts precede those of the other. For the general setting, we show that the mixed page number of $K_{n,n}$ ranges between $\lceil \frac{n}{3} \rceil$~and~$\lfloor\frac{n}{2}\rfloor$.
\end{itemize}
We present preliminary notions and definitions in \cref{sec:preliminaries}. In \cref{sec:conclusions}, we conclude with a list of interesting open problems raised by our work.

\section{Preliminary Definitions and Notation}
\label{sec:preliminaries}
A \emph{vertex order} $\prec$ of a graph $G$ is a total order of its vertices, such that for any two vertices $u$ and $v$ of $G$, $u \prec v$ if and only if $u$ precedes $v$ in the order. We write $[u_1,\ldots, u_k ]$ to denote $u_i \prec u_{i+1}$ for all $1 \leq i \leq k-1$. Let $F$ be a set of $k \geq 2$ independent edges $(u_i, v_i)$ of $G$, that is, $F=\{(u_i, v_i);\;i=1,\ldots,k\}$. If the order is $[u_1, \ldots, u_k, v_k, \ldots, v_1]$, then we say that the edges of $F$ form a \emph{$k$-rainbow}, while if the order is $[u_1, v_1, \ldots, u_k, v_k]$, then the edges of $F$ form a \emph{$k$-necklace}. The edges of $F$ form a \emph{$k$-twist}, if the order is $[u_1, \ldots, u_k, v_1, \ldots, v_k]$; see \cref{fig:necklace-twist}. Two independent edges that form a $2$-twist ($2$-rainbow, $2$-necklace, respectively) are commonly referred to as \emph{crossing} (\emph{nested}, \emph{disjoint}, respectively). A stack (queue) is a set of pairwise non-crossings (non-nested, respectively) edges in $\prec$.

\begin{figure}[t!]
	\centering	
	\begin{subfigure}[b]{.32\textwidth}
		\centering
		\includegraphics[scale=1,page=1]{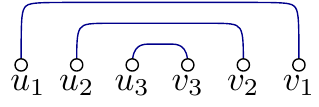}
		\caption{}
		\label{fig:rainbow}
	\end{subfigure}
	\hfil
	\begin{subfigure}[b]{.32\textwidth}
		\centering
		\includegraphics[scale=1,page=3]{figures/preliminaries}
		\caption{}
		\label{fig:necklace}
	\end{subfigure}
	\hfil
	\begin{subfigure}[b]{.32\textwidth}
		\centering
		\includegraphics[scale=1,page=2]{figures/preliminaries}
		\caption{}
		\label{fig:twist}
	\end{subfigure}
	\caption{Illustration of:
		(a)~a $3$-rainbow, 
		(b)~a $3$-necklace, and 
		(c)~a $3$-twist.}
	\label{fig:necklace-twist}
\end{figure}

A mixed \emph{$s$-stack} \emph{$q$-queue} \emph{layout} of a graph consists of a vertex order~$\prec$, called \emph{linear order}, and a partition of its edges into $s$ stacks and $q$ queues. Hence, an $s$-stack ($q$-queue) layout is a mixed $s$-stack $0$-queue ($0$-stack $q$-queue) layout.~The \emph{stack} (\emph{queue}) \emph{number} of a graph $G$ is the minimum $k$, such that $G$ admits~a~$k$-stack ($k$-queue) layout. In this work, we study the mixed page number~of~a~graph. 

\begin{definition}[Mixed page number]\label{def:mixed-page-number}
The \emph{mixed page number} of a graph $G$ is the minimum value of $s+q$ such that $G$ admits an $s$-stack $q$-queue layout.
\end{definition}

\noindent Note that in a $k$-page mixed layout, each of the $k$ pages can be either a stack {\it or} a queue.
We stress the difference with so-called simultaneous layouts in which every page is a stack {\it and} a queue~\cite{DBLP:journals/corr/abs-2007-15102}.
The next property follows from \cref{def:mixed-page-number}.

\begin{prp}\label{prp:numbers}
The mixed page number of a graph is upper bounded by its stack number and by its queue number.
\end{prp}

\section{Basic Properties and Preliminary Results}
\label{sec:properties}

In this section, we introduce basic properties of mixed linear layouts. We start with a somehow unexpected result, which is positioned within the literature as follows. For a graph with $m$ edges, it is known that its stack and its queue numbers are upper bounded by $31\sqrt{m}$~\cite{DBLP:journals/jal/Malitz94} and $e\sqrt{m}$~\cite{DBLP:journals/dmtcs/DujmovicW04}, respectively. Malitz~\cite{DBLP:journals/jal/Malitz94}, and Dujmov{\`i}c and Wood~\cite{DBLP:journals/dmtcs/DujmovicW04} proved these bounds by carefully choosing the underlying linear order (as a certain linear order may contain an $\lfloor \frac{n}{2} \rfloor$-twist and an $\lfloor \frac{n}{2} \rfloor$-rainbow, whose presence does not imply any valuable bound on the stack or on the queue number). In the following, we prove a corresponding upper bound on the mixed number, which notably holds for every fixed linear order~of~the~vertices.

\begin{theorem}\label{thm:m-edges}
The mixed page number of a graph with $m$ edges is at most $\lfloor \sqrt{2m} \rfloor$ for every fixed linear order of the vertices.
\end{theorem}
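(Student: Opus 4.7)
\medskip\noindent\textbf{Proof plan.}
The plan is to translate the claim into a statement about monotone subsequence decompositions of an integer sequence, and then apply a Young-diagram counting argument.

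First, fix the linear order of the vertices and, for each edge $e$, denote by $\ell_e<r_e$ the positions of its endpoints. Sort the edges by $\ell_e$, breaking ties arbitrarily, and let $r_1,r_2,\ldots,r_m$ be the resulting sequence of right endpoints. Under this sorting, two independent edges $e_i,e_j$ with $i<j$ are nested exactly when $r_i>r_j$, and crossing or disjoint exactly when $r_i<r_j$. Consequently, any set of edges whose $r$-values form an \emph{increasing} subsequence is nesting-free and fits on a single queue, while any set whose $r$-values form a \emph{decreasing} subsequence is pairwise nested (a rainbow) and therefore fits on a single stack. Thus the mixed page number is at most the minimum number of monotone subsequences needed to partition $r_1,\ldots,r_m$.

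I would then invoke Greene's theorem. Let $\lambda=(\lambda_1\geq\lambda_2\geq\cdots)$ denote the shape of the Robinson--Schensted tableau of the sequence. For every $s\geq 0$, Greene's theorem yields $s$ pairwise disjoint increasing subsequences of combined length $\lambda_1+\cdots+\lambda_s$, and on the residual the longest increasing subsequence has length exactly $\lambda_{s+1}$; by Dilworth's theorem, this residual then decomposes into $\lambda_{s+1}$ decreasing subsequences. Concatenating these two decompositions yields a monotone partition of $r_1,\ldots,r_m$ into $s+\lambda_{s+1}$ parts, so the mixed page number is at most $\min_{s\geq 0}(s+\lambda_{s+1})$.

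The proof concludes with a short counting step. If $s+\lambda_{s+1}\geq K+1$ for every $s\geq 0$, then $\lambda_{s+1}\geq K+1-s$ for $s=0,1,\ldots,K$, and summing these inequalities gives
\[
m\;\geq\;\sum_{i=1}^{K+1}\lambda_i\;\geq\;\sum_{i=1}^{K+1}(K+2-i)\;=\;\binom{K+2}{2}\;>\;\frac{(K+1)^2}{2},
\]
whence $\lfloor\sqrt{2m}\rfloor\geq K+1$. Contrapositively, some $s$ achieves $s+\lambda_{s+1}\leq\lfloor\sqrt{2m}\rfloor$, which bounds the mixed page number as desired. The main obstacle is the reduction in the first paragraph, namely the verification that a decreasing $r$-subsequence exactly realises a rainbow (and hence a valid stack) and the analogous statement for queues; once this is set up, the bound follows from standard combinatorial machinery and the elementary Young-diagram sum above.
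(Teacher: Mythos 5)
Your proposal is correct, but it takes a genuinely different route from the paper. The paper argues greedily: as long as the maximum rainbow in the fixed order exceeds $\sqrt{2m}$, it is peeled off into a stack, and once the maximum rainbow has size at most $\sqrt{2m}$ the remaining edges are placed into that many queues (using the fact that, for a fixed order, the required number of queues equals the maximum rainbow size); the page count is then controlled by the recurrence $T(m)\leq T(m-\lceil\sqrt{2m}\rceil)+1$. You instead sort the edges by left endpoint, observe that (weakly) increasing right-endpoint subsequences are queue-valid and (weakly) decreasing ones are stack-valid (edges sharing an endpoint neither cross nor nest, so ties are harmless), and then bound the minimum number of monotone parts via Greene's theorem, the dual Dilworth decomposition, and a Young-diagram counting step. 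This is a valid, non-recursive, one-shot argument, and it dovetails nicely with the monotone-subsequence correspondence the paper itself exploits in its NP-hardness reduction for fixed orders; the price is heavier machinery (RSK/Greene) where the paper needs only the rainbow--queue characterization and an elementary induction. Two small polish points: the residual's longest increasing subsequence is only \emph{at most} $\lambda_{s+1}$ (which is all you need, and follows by applying Greene's theorem to $s+1$ parts --- claiming equality is an unneeded overstatement), and the decomposition of the residual into $\lambda_{s+1}$ decreasing subsequences is Mirsky's theorem (the dual of Dilworth) rather than Dilworth's theorem proper; also, with repeated right endpoints you should fix one convention (say, weakly increasing versus strictly decreasing) so that the word-version of Greene's theorem applies cleanly.
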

\begin{proof}
Let $G$ be a graph with $m$ edges. Consider the maximum rainbow formed in some fixed linear order of its vertices. If its size is at most $\sqrt{2m}$, then all edges of $G$ can be assigned to at most $\lfloor \sqrt{2m} \rfloor$ queues~\cite{DBLP:journals/siamcomp/HeathR92}. Otherwise, we assign~the~edges of the largest rainbow to a single stack, and iteratively apply the argument to the remaining graph. 
The number of iterations $T(m)$ corresponding to the total number of stacks and queues used, is bounded by the following recurrence:
%
\[
T(m) \leq \begin{cases} 
1 &\mbox{if } m=1 \\
T(m - \lceil \sqrt{2m}\rceil) + 1   & \mbox{otherwise } 
\end{cases}
\]
%
Using strong induction and the fact that $m - \sqrt{2m} = (\sqrt{m} - 1/\sqrt{2})^2 - 1/2$, it can be easily proven that $T(m) \leq \lfloor \sqrt{2m} \rfloor$, which completes the proof.
\end{proof}

\noindent In contrast to the positive result above that guarantees an upper bound on~the mixed page number of a graph when the vertex order is fixed, in the same setting, determining the minimum number of required pages turns out to be \NP-hard.

\clearpage
\begin{theorem}\label{thm:fixed_hardness}
Given a graph with a fixed vertex order, it is \NP-hard to determine its mixed page number respecting the vertex order.
\end{theorem}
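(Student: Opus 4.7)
The plan is to reduce from computing the stack number of a graph with a fixed vertex order, which is $\NP$-hard since it is equivalent to coloring a circle graph (the crossing graph of chords placed around a circle). Given such an instance $(H, \pi, k)$, I would construct a supergraph $G \supseteq H$ equipped with a fixed vertex order $\sigma$ extending $\pi$ and a threshold $k' = k + c$ for some constant $c$ depending on the gadget, so that the mixed page number of $(G, \sigma)$ is at most $k'$ if and only if the stack number of $(H, \pi)$ is at most $k$.

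The central device is a \emph{stack-forcing gadget} attached to $H$. Its role is twofold: (i) it alone requires at least $c$ pages, which in any optimal layout must all be stacks; and (ii) it precludes any remaining page carrying an edge of $H$ from being a queue. A natural building block is a large \emph{twist}: $s$ pairwise crossing edges need $s$ stack pages but fit onto a single queue page. By interleaving several twists with carefully positioned rainbows inside the vertex range of $H$, one can arrange that a queue page touching any edge of $H$ necessarily creates a forbidden nesting with some gadget edge, while a queue page carrying only gadget edges forces additional queues to absorb the remaining twist edges and thus blows the budget $k'$.

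The forward direction is then routine: a $k$-stack layout of $(H,\pi)$ extends to a $(k+c)$-page mixed layout of $(G,\sigma)$ by routing the gadget edges through $c$ fresh stacks dedicated to the gadget. The reverse direction is the heart of the argument. Starting from any $k'$-page mixed layout of $(G,\sigma)$, one shows that at most $k$ pages are used by edges of $H$ and that each of them may be assumed to be a stack. Those $k$ pages then form a valid $k$-stack partition of $(H,\pi)$, and equivalently a $k$-coloring of the underlying circle graph.

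The main obstacle is the gadget design. It must be local enough not to disturb the crossing/nesting relations among the edges of $H$, yet rigid enough to break the stack/queue symmetry precisely on the pages intended for $H$. Realizing this asymmetry typically requires a careful combination of twist-type and rainbow-type substructures, together with a tight combinatorial accounting that pins down the constant $c$ so that the equivalence of the two decision problems holds exactly at the threshold $k'$.
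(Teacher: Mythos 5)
Your reduction is not the paper's (which needs no gadget at all), and as it stands it has a genuine gap at exactly the point you call ``the heart of the argument.'' The stack-forcing gadget is never constructed, and its claimed property~(ii) --- that it ``precludes any remaining page carrying an edge of $H$ from being a queue'' --- cannot be enforced in the direct way you describe. Crossing and nesting constraints act only \emph{within} a page: an adversarial layout may place $H$-edges on a queue page that contains no gadget edge whatsoever, and then no ``forbidden nesting with some gadget edge'' ever arises. The only conceivable leverage is a global budget argument (every queue in a tight layout must already be occupied by gadget edges nesting over all of $H$), but this is precisely what you leave unproven, and it is far from clear it can be made to work: edges nesting over the whole range of $H$ form a rainbow, and a rainbow can always retreat onto a single stack, so nothing forces such edges to populate the queues; conversely, twist gadgets force \emph{some} edges onto \emph{some} queue, but the adversary chooses which edges and which queue. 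Since queues genuinely do help for $H$-edges under a fixed order (a $t$-twist inside $H$ needs $t$ stacks but one queue), the reverse direction is false without a working gadget, so the constant $c$ and the threshold $k'=k+c$ are not justified. A secondary concern is the base problem: fixed-order stack number equals circle-graph coloring, whose \NP-hardness (for $k\ge 4$) rests on Unger's argument, which is considerably heavier machinery than needed here.

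For comparison, the paper reduces from deciding whether a permutation $\pi$ of length $n$ can be partitioned into $k$ monotone subsequences (Wagner), and the reduction is immediate: build a perfect matching on $2n$ vertices with a fixed order in which the two endpoints of the $i$-th edge are placed so that two edges nest exactly when the corresponding entries of $\pi$ form a decreasing pair and cross exactly when they form an increasing pair. Then stacks are decreasing subsequences, queues are increasing subsequences, and an $s$-stack $q$-queue layout with $s+q=k$ exists if and only if $\pi$ is $k$-coverable --- no gadget, no constant $c$, and no asymmetry between stacks and queues to engineer. If you want to salvage your route, you would need to exhibit the gadget explicitly and prove the counting claim; the monotone-subsequence reduction shows that this effort can be avoided entirely.
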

\begin{proof}
Our reduction is from the \NP-complete problem of deciding whether a permutation is \emph{$k$-coverable}~\cite{DBLP:journals/eik/Wagner84}, i.e., whether it can be partitioned into $k$ monotone subsequences. Given a permutation $\pi=\langle \pi_1, \ldots, \pi_n\rangle$ of size $n$, we construct a graph $G$ on $2n$ vertices, whose vertex order is $1, \dots, n, \pi_1, \dots, \pi_n$, and whose edges are $(1, \pi_1),\ldots, (n,\pi_n)$, i.e., $G$ is a matching.
Then, a stack in $G$ corresponds to a decreasing subsequence of $\pi$, while a queue 
is an increasing subsequence. Thus, finding a mixed $s$-stack $q$-queue layout with 
$s+q=k$ for $G$ under the vertex order is equivalent to deciding if $\pi$ is $k$-coverable, proving the claim.
\end{proof}	

\noindent Note that in order to determine whether a graph admits a mixed layout with at most two pages in a certain vertex order in polynomial time, we can use a simple $2$-SAT formulation, in which each edge $e$ yields a variable $x_e$ such that $x_e = \texttt{true}$ if and only if $e$ is assigned to the first of the two available pages. Then, crossings or nestings can be easily formulated by $2$-SAT clauses; see~\cite{DBLP:conf/gd/Bekos0Z15}.

\begin{observation}
Given a graph with a fixed vertex order, we can decide in polynomial time whether its mixed page number respecting the vertex order is at~most~$2$.
\end{observation}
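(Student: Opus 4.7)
\medskip\noindent\textbf{Proof proposal.}
The plan is to reduce the decision problem to three instances of \textsc{2-SAT}, one per assignment of page types to the two available pages: two stacks, two queues, and one stack together with one queue. Since \textsc{2-SAT} is decidable in linear time and each instance will involve only $O(m^2)$ clauses, the whole procedure runs in polynomial time; the graph admits the desired layout if and only if at least one of the three instances is satisfiable.

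First I would fix a type assignment and introduce a Boolean variable $x_e$ for each edge $e$, interpreting $x_e = \texttt{true}$ as ``$e$ lies on the first page''. Then, for every pair of distinct edges $\{e, f\}$, I would inspect their relative configuration in the given vertex order and add clauses as follows. If both pages are stacks, a crossing pair must be separated, encoded by $(x_e \vee x_f) \wedge (\neg x_e \vee \neg x_f)$. If both pages are queues, the analogous clauses are added for every nested pair. In the mixed case with a stack on page~$1$ and a queue on page~$2$, a crossing pair cannot both lie on page~$1$, giving the clause $\neg x_e \vee \neg x_f$, while a nested pair cannot both lie on page~$2$, giving $x_e \vee x_f$. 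Disjoint edges and edges sharing an endpoint impose no constraints.

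The key point to verify is that these pairwise clauses capture all conflicts, which is immediate from the definitions: both the ``no crossing in a stack'' and the ``no nesting in a queue'' properties are pairwise conditions on edges, so no higher-arity clauses are needed. Hence the resulting 2-SAT formula is satisfiable if and only if there is an edge-to-page assignment avoiding every crossing within a stack page and every nesting within a queue page, which by definition is a mixed $2$-page layout respecting the fixed order. Solving each of the three 2-SAT instances and returning \emph{yes} whenever at least one is satisfiable completes the algorithm.

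Since the setup maps almost verbatim onto the 2-SAT template, no real obstacle arises; the only care required is the bookkeeping of the three possible page-type combinations and the observation that the $1$-page and $0$-page cases are automatically subsumed (e.g., a single-stack layout is witnessed by setting $x_e = \texttt{true}$ for all $e$ in the two-stack instance).
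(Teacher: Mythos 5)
Your proposal is correct and follows essentially the same route as the paper: a variable per edge indicating its page, with crossings and nestings encoded as 2-SAT clauses, solvable in polynomial time. The only addition is your explicit enumeration of the three page-type combinations (stack--stack, queue--queue, stack--queue), which the paper leaves implicit in its remark that ``crossings or nestings can be easily formulated by 2-SAT clauses.''
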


\noindent Since there exist permutations with length $n$ that cannot be decomposed in $o(\sqrt{n})$ monotone subsequences~\cite{DBLP:journals/eik/BrandstadtK86}, the construction given in the proof of \cref{thm:fixed_hardness} implies the existence of graphs with $m$ edges and corresponding fixed linear vertex-orders, whose mixed page number is $\Omega(\sqrt{m})$ respecting the vertex order. In the following theorem, we continue our study on properties of mixed linear layouts by answering a so-called extremal question regarding the maximum number of edges contained in a mixed linear layout of a graph. 

\begin{theorem}\label{thm:density}
Every $n$-vertex graph with mixed page number $k$ has at most $f(n,k)$ edges, where: 
%
\begin{equation*}
f(n,k) = \begin{cases} 
2kn - 2k^2 + k - 2 &\mbox{if } k \leq \frac{n}{4}+2 \\
\frac{n^2}{8} + (k+1) n - 3 k -2  & \mbox{otherwise } 
\end{cases}
\end{equation*}
%
Also, for every $n$ and $k$ with $n \geq 4k + 1$, there exists an $n$-vertex graph with mixed page number $k$ and $2kn-2k^2 + k - 2$ edges.
\end{theorem}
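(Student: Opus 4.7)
The theorem has two halves: an upper bound $|E(G)| \le f(n,k)$ for every $n$-vertex $G$ of mixed page number $k$, and a matching extremal construction when $n \ge 4k+1$. I would treat them separately, the upper bound by induction on $n$ (in the main regime $k \le n/4 + 2$) and the construction by an explicit graph.

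\textbf{Upper bound.} In the primary regime the formula obeys $f(n,k) - f(n-1,k) = 2k$, which invites the inductive step: find a vertex $v$ of degree at most $2k$, delete it, and apply induction to obtain $|E(G)| \le f(n-1,k) + 2k = f(n,k)$; the restricted layout still witnesses mixed page number $\le k$. The heart of the argument is therefore the \emph{low-degree vertex lemma}: every sufficiently large $n$-vertex graph with mixed page number $k$ has a vertex of degree $\le 2k$. I would fix a mixed $s$-stack $q$-queue layout with $s+q=k$ and examine all $k$ pages jointly: each stack page is outerplanar in the spine order (hence contains a vertex of page-degree $\le 2$), each queue page is a $1$-queue graph with $\le 2n-3$ edges (hence small average page-degree), and the task is to combine these per-page guarantees into a single vertex of global degree $\le 2k$. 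Small-$n$ base cases where no such vertex exists (e.g.\ $K_6$, which has mixed page number $2$ yet min-degree $5$) must be handled by a direct verification. The secondary regime $k > n/4 + 2$, where $f(n,k) = \tfrac{n^2}{8} + (k+1)n - 3k - 2$, can be closed either by continuing the induction (the two formulas agree at the boundary $k = n/4 + 2$) or by a direct density count splitting edges by length.

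\textbf{Construction.} For $n \ge 4k+1$, I would exhibit an explicit graph $G^\ast$ on ordered vertices $v_1,\ldots,v_n$: let $L = \{v_1,\ldots,v_k\}$, $R = \{v_{n-k+1},\ldots,v_n\}$, $M = V \setminus (L\cup R)$, and include every edge inside $L\cup R$ (a $K_{2k}$, contributing $\binom{2k}{2} = 2k^2 - k$ edges), every edge between $M$ and $L\cup R$ (contributing $2k(n-2k)$ edges), and $2k-2$ carefully placed extra edges inside $M$ (e.g., a short path) to reach the target of $2kn - 2k^2 + k - 2$ edges. I would then display an explicit $s$-stack $q$-queue layout with $s+q=k$: for instance, $k$ queues each holding two consecutive edge-lengths (since length-$\ell$ and length-$(\ell+1)$ edges never nest, $2k$ consecutive lengths fit into $k$ queues) for the short bipartite edges, and stacks for the long $L$--$R$ chords. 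The mixed page number is \emph{exactly} $k$ because $f(n,k) - f(n,k-1) = 2n - 4k + 3 > 0$ for $n \ge 4k+1$, so by the already-proved upper bound no $(k{-}1)$-page layout can accommodate this many edges.

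\textbf{Main obstacle.} The hard step is the low-degree vertex lemma. A naive per-page sum gives $|E| \le k(2n-3)$, overshooting $f(n,k)$ by exactly $2(k-1)^2$; correspondingly, a plain averaging argument yields only a vertex of degree $\le 4k-1$, a factor of two too weak. Closing this gap demands a structural argument that couples the stack and queue constraints rather than summing pages independently: plausibly a charging scheme in which the $2(k-1)^2$ "missing" edges are attributed to coordinated rainbow/twist patterns across pairs of pages, or an extremal-vertex argument exploiting boundary behavior in the linear order. Handling the small-$n$ cases where the min-degree genuinely exceeds $2k$ (so that the induction alone breaks down) is a secondary but delicate subtlety; the construction side, once the $2k-2$ extra edges are placed compatibly with the page structure, is comparatively mechanical.
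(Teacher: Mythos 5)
Your upper-bound strategy is genuinely different from the paper's, but it rests on a lemma that is not only unproven (as you concede) but false, and not merely for sporadic small cases like $K_6$: already for $k=1$ there are arbitrarily large $1$-queue graphs (hence mixed page number $1$) with minimum degree $3>2k$. For example, on $v_1\prec\dots\prec v_{2t}$ take the edges $(v_1,v_j)$ for $2\le j\le t+1$, the edges $(v_j,v_{2t})$ for $t\le j\le 2t-1$, and the staircase edges $(v_i,v_{i+t-1}),(v_i,v_{i+t})$ for $2\le i\le t$; one checks that no two independent edges nest, the graph has $2n-3$ edges, and every vertex other than the two hubs $v_1,v_n$ has degree exactly $3$. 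So the vertex-deletion induction has no engine: graphs of mixed page number $k$ can have minimum degree exceeding $2k$ for all $n$, and your fallback (``direct verification of base cases'') cannot absorb an infinite family. The factor-two gap you correctly identify over naive averaging is closed in the paper by a completely different mechanism: fix a layout, use the per-page capacities (first stack $\le 2n-3$, each further stack $\le n-3$, $i$-th queue $\le 2n-4i+1$), argue that the extremal page mix is $1$ stack plus $k-1$ queues when $k\le\frac{n}{4}+2$, and then run a global midpoint/rainbow counting argument (edges sharing a midpoint cannot share a queue, and midpoints near the two ends of the order admit few edges); this recovers exactly the $2(k-1)^2$ deficit without any low-degree vertex or coupling of stacks with queues.

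The construction half also has a genuine gap. The edge count of your $G^\ast$ is right, and your exactness argument (no $(k-1)$-page layout can hold $2kn-2k^2+k-2$ edges since $f(n,k)-f(n,k-1)=2n-4k+3>0$) is the same as the paper's. But no valid $k$-page layout is exhibited: with $L$ first and $R$ last, the edges between $M$ and $L\cup R$ have lengths ranging over $\Theta(n)$ distinct values, so ``two consecutive lengths per queue'' would require $\Theta(n)$ queues, not $k$; and since $G^\ast$ meets the bound $f(n,k)$ with equality, every page must be completely saturated, which is precisely the delicate part. Indeed it is not even clear that your graph (a $K_{2k}$ joined to a large independent set, plus a few edges) has mixed page number at most $k$ for $k\ge 2$; it does for $k=1$, but for general $k$ this needs a proof that is entirely missing. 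The paper instead constructs a carefully engineered $1$-stack $(k-1)$-queue layout --- $k-2$ symmetric queues of nested star pairs, one queue made of three double-stars, and a saturated stack containing the spine --- and verifies page-validity edge family by edge family; some argument of comparable precision is unavoidable here.
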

\begin{proof}
Let $G$ be an $n$-vertex graph with mixed page number $k$ and let $\mathcal{L}$ be a mixed layout of it with $s$ stacks and $q$ queues, such that $s+q=k$.  
It is known that the first stack of $\mathcal{L}$ has at most $2n-3$ edges, while each subsequent stack of $\mathcal{L}$ has at most $n-3$ edges~\cite{DBLP:journals/jct/BernhartK79}. Also, for $1\leq i\leq q$, the $i$-th queue of $\mathcal{L}$ has at most $2n-4i+1$ edges~\cite{DBLP:journals/dmtcs/DujmovicW04}. 
Since we seek for an upper bound, we may assume w.l.o.g.\ that each stack or queue of $\mathcal{L}$ contains the maximum amount of edges it can hold.
In this regard, we may further assume w.l.o.g.\ that the first stack is always part of $\mathcal{L}$, since its contribution to the number of edges of $G$ is maximum. We next argue about the remaining $k-1$ pages of $\mathcal{L}$.
Since the sparsest queue contains $2n -4(k-1) +1$ edges, while all remaining stacks contain $n-3$ edges each, it follows that, if $k \leq \frac{n}{4}+2$, then $\mathcal{L}$ is a $1$-stack, $(k-1)$-queue layout. Otherwise, $\mathcal{L}$ contains  $k-\frac{n}{4}-1$ stacks and $\frac{n}{4}+1$ queues.

Assume first that $k \leq \frac{n}{4}+2$. Let $v_0,v_1,\ldots,v_{n-1}$ be the vertices of $G$ in the order that appear in $\mathcal{L}$.  
Since $\mathcal{L}$ contains one stack, we may assume w.l.o.g.\ that, for all integers $0 \leq i \leq n-2$, $G$ contains edge $(v_i,v_{i+1})$, which belongs to the stack in $\mathcal{L}$. 
So, we next focus on the edges of the $k-1$ queues.
For all integers $0 \leq i < j \leq n-1$, let the \emph{midpoint}~\cite{DBLP:journals/dmtcs/DujmovicW04} of edge $(v_i,v_j)$ of $G$ be $\frac{1}{2}(i+j)$. 
Since two edges with the same midpoint form a $2$-rainbow~\cite{DBLP:journals/dmtcs/DujmovicW04}, they cannot belong to the same queue. 
For all integers $1 \leq i \leq k-1$, at most $i-1$ edges have a midpoint of $i-1$, and at most $i-1$ edges (excluding edges in the stack) have a midpoint of $i-\frac{1}{2}$. 
Also, for all integers $1 \leq i \leq k-1$, at most $i-1$ edges
have a midpoint of $n-i$, and at most $i-1$ edges (again excluding edges in the stack) have a midpoint of $n-i-\frac{1}{2}$. Since $n \geq 2k$, we avoid double-counting. 
Hence, the number of edges of $G$, neglecting the ones in the stack, is at most:
%
\[
4\sum_{i=1}^{k-1} (i-1)  \; +  \;\;
(2n - 1 - 4(k-1))(k-1) \;
\]
%
The latter is equal to $2kn - 2k^2 + k - 2n + 1$. So, $G$ has at most $2kn - 2k^2 + k - 2$ edges.

\begin{figure}[t]
	\begin{subfigure}[b]{.48\linewidth}
		\flushleft
		\includegraphics[page=6,width=\textwidth]{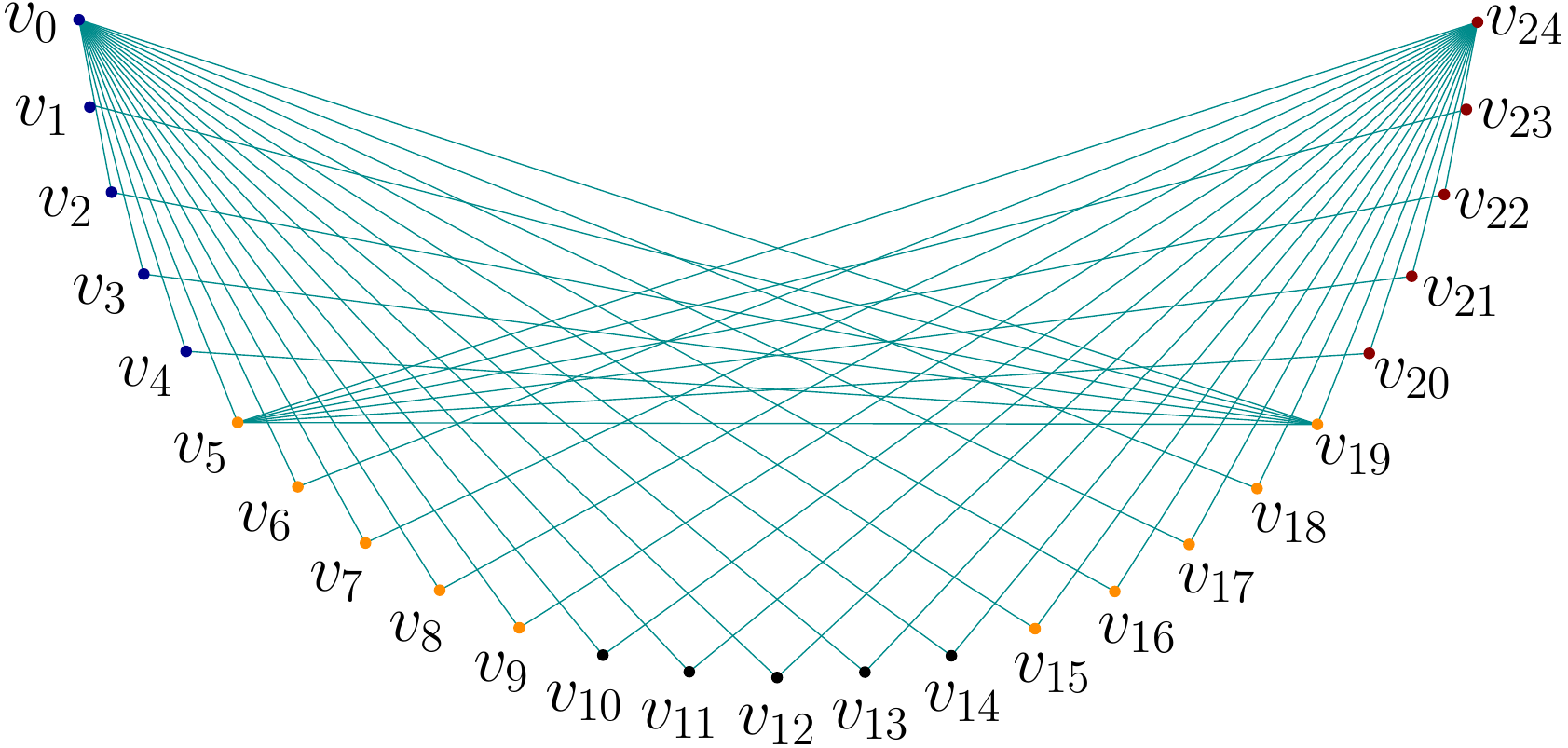}
		\caption{$\Sh$}
	\end{subfigure}\hfill
	\begin{subfigure}[b]{.48\linewidth}
		\flushright
		\includegraphics[page=1,width=\textwidth]{density-lowerbound-2}
		\caption{$\Qh_0$}
		\label{fig:lb_q0}
	\end{subfigure}    
	\begin{subfigure}[b]{.48\linewidth}
		\flushleft
		\includegraphics[page=2,width=\textwidth]{density-lowerbound-2}
		\caption{$\Qh_1$}
		\label{fig:lb_q1}
	\end{subfigure}\hfill
	\begin{subfigure}[b]{.48\linewidth}
		\flushright
		\includegraphics[page=3,width=\textwidth]{density-lowerbound-2}
		\caption{$\Qh_2$}
		\label{fig:lb_q2}
	\end{subfigure}
	\begin{subfigure}[b]{.48\linewidth}
		\flushleft
		\includegraphics[page=4,width=\textwidth]{density-lowerbound-2}
		\caption{$\Qh_3$}
		\label{fig:lb_q3}
	\end{subfigure}\hfill
	\begin{subfigure}[b]{.48\linewidth}
		\flushright
		\includegraphics[page=5,width=\textwidth]{density-lowerbound-2}
		\caption{$\Qh_4$}
		\label{fig:lb_q4}
	\end{subfigure}
	\caption{Illustration for the lower bound of $2nk - 2k^2 + k - 2$ of \cref{thm:density} with $n = 25$ and $k = 6$ which yields a $1$-stack $5$-queue layout with $232$ edges in total.}
	\label{fig:mixed-density-lb}
\end{figure}

We now consider the case in which $k > \frac{n}{4}+2$, i.e., $\mathcal{L}$ contains $k-\frac{n}{4}-1$ stacks and $\frac{n}{4}+1$ queues. By our discussion above, $\mathcal{L}$ contains $\frac{3n^2}{8} + \frac{9 n}{4} - 8$ edges in total in its first stack and in its $\frac{n}{4}+1$ queues. Each of the remaining  $k - \frac{n}{4} - 2$ stacks of $\mathcal{L}$ contain at most $n-3$ edges, so in total, $kn - 3 k - \frac{n^2}{4} - \frac{5 n}{4} + 6$. By summing up, we can conclude that the number of edges in $G$ is at most $\frac{n^2}{8} + (k+1) n - 3 k - 2$.

For given integers $n$ and $k$, such that $n \ge 4k+1$, we present an $n$-vertex graph $G_{n,k}$ with $2nk-2k^2+k-2$ edges, which admits a $1$-stack, $(k-1)$-queue layout $\mathcal{L}_{n,k}$. To ease the presentation, we assume that $n$ is odd; the construction~easily generalizes to even values of $n$. Since $n \ge 4k+1$, by our upper bound, the mixed page number of~$G_{n,k}$ cannot be $k-1$. Let $v_0,\ldots,v_{n-1}$ be the vertices of $G_{n,k}$ as they appear in $\mathcal{L}_{n,k}$. 
The first $k-2$ queues of $\mathcal{L}_{n,k}$ are symmetric, namely, for each $0 \leq i \leq k-3$, queue $\Qh_i$ contains $2n-4i-5$ edges (see \cref{fig:lb_q0,fig:lb_q1,fig:lb_q2,fig:lb_q3}):

\medskip 
\begin{tabular}{@{~--~} l @{\hspace{1.5cm}} r @{$ \;\; \leq \; j \; \leq \;\; $} l} 
$(v_i, v_j)$              & $i + 2$  & $n - k - i$ \\
$(v_{k + i - 1}, v_{j})$  & $n-k-i$   & $ n - i - 2$ \\
$(v_{j}, v_{n-k-i})$      & $i+1$     & $ k+i-2$ \\
$(v_{j}, v_{n-i-1})$      & $k + i -1$ & $ n - i - 3$ 
\end{tabular}

\medskip 
The structure of the last queue $\Qh_{k-1}$ of $\mathcal{L}_{n,k}$, however, is different. It contains three so-called double-stars, where a \emph{double-star} rooted at $\langle v_\ell, v_r \rangle$ with $0 \leq \ell < r \leq n-1$ contains the following $2(r-\ell)-3$ edges: $(v_{\ell}, v_{\ell+2}),\ldots,(v_{\ell}, v_{r})$ and $(v_{\ell+1}, v_{r}),\ldots,(v_{r-2}, v_{r})$. With this definition at hand, the three double-stars of $\Qh_{k-1}$ of $\mathcal{L}_{n,k}$ are rooted at $\langle v_{-2 - 2 k + n}, v_{(n-1)/2 - 1} \rangle$, $\langle v_{(n-1)/2 -3}, v_{(n-1)/2 +3} \rangle$, and $\langle v_{(n-1)/2+1}, v_{n-k+1} \rangle$, and in total contain $2n - 4k + 3$ edges; see \cref{fig:lb_q4}.
Summing up over $i$, we obtain that the total number of edges in $\Qh_0, \ldots, \Qh_{k-2}$ is $2nk-2k^2+k-2n +1$.
Further, stack $\Sh$ contains the following $2n-3$ edges:

\medskip
\begin{tabular}{@{~--~} l @{\hspace{0.7cm}} r @{$ \;\; \leq \; j \; \leq \;\; $} l} 
 $(v_j, v_{n-1})$               & $0$  & $k-2$ \\
 $(v_{k-2}, v_j)$               & $n - k + 1$  & $n - 2$ \\
 $(v_{(n-1)/2}, v_{j})$         & $k-2$   & $ (n-1)/2 - 4$ \\
 $(v_{(n-1)/2}, v_{j})$         & $(n-1)/2 + 4$   & $ n - k + 1 $ \\
 $(v_{j}, v_{j+1})$             & $0$ & $ n-2$ \\
\end{tabular}

\begin{tabular}{@{~--~} l @{\hspace{0.9cm}} r @{$ \;\; ~ \; j \; \in \;\; $} l} 
 $(v_{(n-1)/2 + j}, v_{(n-1)/2+j+2})$           & $~$   & $\{-2,-4,2,4\}$ \\
\end{tabular}

\medskip Hence, $G_{n,k}$ contains $2nk-2k^2 + k-2$ edges and has mixed page number $k$. To complete the proof, we observe that for each $i$ in $[1, k-1]$ no two edges in $\Qh_i$ nest, no two edges in stack $S$ cross, and no two pages share an edge. In particular, for each $j\in\{-3,-1,1,3\}$, stack $\Sh$ contains only two edges incident to $v_{(n-1)/2+j}$, while queue $\Qh_{k-2}$ contains a star at this vertex avoiding these two particular edges. Also, queue $\Qh_{k-2}$ contains two edges incident to $v_{(n-1)/2}$, while stack $\Sh$ contains a star at this vertex which avoids exactly these two edges.
\end{proof}

We conclude this section with an observation stemming from a result by Dujmovi{\'c} and Wood~\cite{DBLP:journals/dmtcs/DujmovicW04}, which states that the queue number of a graph is at most the maximum queue number of its biconnected components plus one. Since in their approach an extra queue is used (containing only edges incident to cut-vertices), the result easily caries over to the mixed page number.

\begin{observation}\label{obs:biconnected-components}
The mixed page number of a graph is upper bounded by the maximum mixed page number of its biconnected components plus one. 
\end{observation}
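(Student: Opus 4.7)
The plan is to adapt the construction of Dujmovi{\'c} and Wood~\cite{DBLP:journals/dmtcs/DujmovicW04} used for the corresponding queue-number bound, and to observe that it transfers verbatim to the mixed setting. Let $B_1,\ldots,B_\ell$ be the biconnected components of $G$ with block-cut tree $T$, set $k^* := \max_i \mn(B_i)$, and for each $B$ fix an optimal mixed layout $\mathcal{L}_B$ realising $\mn(B)\leq k^*$. Root $T$ at an arbitrary block and build the global vertex order by a depth-first traversal: when visiting a block $B$, its vertices are printed in the order of $\mathcal{L}_B$, and whenever a cut-vertex with a non-trivial child subtree is reached, the traversal recurses into that subtree immediately after the cut-vertex and then resumes with $B$. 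This ensures that, for every block, the relative $\mathcal{L}_B$-order of its vertices is preserved in the spine.

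Allocate $k^*+1$ pages: $k^*$ \emph{main} pages whose stack/queue types are chosen once globally so as to accommodate every $\mathcal{L}_B$ (through a per-block relabeling that maps stacks to stacks and queues to queues), plus one extra page declared as a queue. Each edge of $B$ is placed on its main page as prescribed by $\mathcal{L}_B$, except that---as in Dujmovi{\'c}--Wood---every edge of $G$ incident to a cut-vertex is diverted to the extra queue. Correctness then follows from three checks. Within a single block, the inherited spine order coincides with $\mathcal{L}_B$, so non-cut-vertex-incident edges keep their pairwise crossing/nesting relations and the corresponding stack/queue pages stay internally valid. Across different blocks, two edges sharing a main page either lie in disjoint spine intervals or form an ancestor--descendant configuration; in the latter case the only possible interaction is a nesting, which is harmless for stack pages, while the Dujmovi{\'c}--Wood case analysis shows that every such nesting on a queue page has its inner edge incident to the anchor cut-vertex of the descendant block, hence already relocated to the extra queue. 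Finally, the extra queue itself is nesting-free: any two of its edges either share a cut-vertex (and are therefore not independent, so by the definitions of~\cref{sec:preliminaries} do not form a nesting) or lie in spine intervals corresponding to disjoint subtrees of $T$.

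The main obstacle is the ancestor--descendant case analysis for queue pages, which is where Dujmovi{\'c} and Wood do the technical work; it carries over unchanged to the mixed setting. The only adaptation needed is the observation that stack pages are automatically unaffected by the between-block nestings produced by the recursive spine construction, so no separate treatment is required for them.
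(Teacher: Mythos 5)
Your overall route is the one the paper itself takes --- reuse the Dujmovi{\'c}--Wood block construction~\cite{DBLP:journals/dmtcs/DujmovicW04} and add one extra queue for edges attached to cut-vertices --- but the way you flesh it out contains steps that fail, and the most important one is exactly where the mixed setting differs from the pure queue setting. You allocate only $k^*$ main pages whose types are ``chosen once globally so as to accommodate every $\mathcal{L}_B$'' via a stack-to-stack, queue-to-queue relabeling. Nothing guarantees such a choice exists: different blocks may attain their mixed page number only with different stack/queue splits (one block only with $k^*$ stacks, another only with $k^*$ queues; compare $K_6$, whose only $2$-page layouts are $1$-stack $1$-queue). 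A type-preserving reuse then requires $\max_B s_B+\max_B q_B$ main pages, which can be close to $2k^*$, and you cannot simply ignore types: a global page carrying a stack of one block and a queue of another is in general neither a stack nor a queue, even when the two blocks occupy disjoint spine intervals, since the first block's edges may nest and the second block's edges may cross on that same page. In Dujmovi{\'c}--Wood every page is a queue, so this issue never arises; claiming the argument ``transfers verbatim'' is precisely what hides this mixed-specific difficulty, and your per-block relabeling remark does not resolve it.

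Two further concrete steps are wrong. First, diverting \emph{every} edge incident to a cut-vertex to the single extra queue can invalidate that page: in a block isomorphic to $K_4$ whose inherited order is $c_1,c_2,b,a$ with $c_1,c_2$ cut-vertices, the independent edges $(c_1,a)$ and $(c_2,b)$ nest, yet both would be placed on the extra queue; your dichotomy (shared cut-vertex or disjoint subtrees of $T$) misses the case of two edges of the same block attached to different cut-vertices. What the construction actually needs is that each non-root block contributes only its edges to its \emph{parent} cut-vertex, and even then their pairwise non-nesting across blocks has to be argued via the chosen ordering of subtrees. Second, with your spine rule ``recurse into the subtree immediately after the cut-vertex,'' a parent-block edge can nest over a child-block edge that is \emph{not} incident to the anchor: take a parent triangle on $u,c,w$ with a child triangle on $c,x,y$ inserted after $c$, giving the order $u,c,x,y,w$; then $(u,w)$ nests over $(x,y)$, so the asserted case analysis (inner edge always incident to the anchor cut-vertex, hence already on the extra queue) is false for this vertex order, and queue pages shared between ancestor and descendant blocks are not shown to be valid.
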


\section{Complete Graphs}
\label{sec:complete-graphs}

In this section, we present bounds on the mixed page number of $n$-vertex complete graphs. Notably, our upper bound is smaller than the corresponding best-known upper bounds on the stack  and on the queue number of graphs, which implies that combining stacks and queues yields linear layouts that need fewer pages in total. This observation is a follow-up on a series of observations made by Heath and Rosenberg~\cite{DBLP:journals/siamcomp/HeathR92} regarding the ``powers'' of stacks and queues. We start with our lower number on the mixed page number of $K_n$.

\begin{lemma}\label{thm:complete-lower-bound}
The mixed page number of $K_n$ is at least $\lceil\frac{3(n-4)}{8}\rceil$.
\end{lemma}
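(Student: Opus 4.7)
The plan is to derive the bound directly from the density result of \cref{thm:density}. If $k$ denotes the mixed page number of $K_n$, then the $\binom{n}{2}$ edges of $K_n$ force $\binom{n}{2} \leq f(n,k)$, and it suffices to show that any integer $k$ satisfying this inequality is at least $3(n-4)/8$, since integrality then promotes this to $\lceil 3(n-4)/8 \rceil$.

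First I would handle the regime $k > n/4 + 2$, where substituting into the second branch gives $\binom{n}{2} \leq n^2/8 + (k+1)n - 3k - 2$, an inequality that is linear in $k$ and rearranges to
\[
k \;\geq\; \frac{3n^2 - 12n + 16}{8(n-3)} \;=\; \frac{3(n-4)}{8} \;+\; \frac{9n-20}{8(n-3)}.
\]
The correction term is nonnegative for every $n \geq 3$, so $k \geq 3(n-4)/8$ as required. In the complementary regime $k \leq n/4 + 2$, the first branch turns the density constraint into the quadratic inequality $4k^2 - (4n+2)k + n^2 - n + 4 \leq 0$, whose smaller root $k_- = (2n+1 - \sqrt{8n-15})/4$ is the relevant lower bound on $k$. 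A one-step squaring argument (legal because both sides of $n+14 \geq 2\sqrt{8n-15}$ are nonnegative) reduces $k_- \geq 3(n-4)/8$ to the globally valid $n^2 - 4n + 256 \geq 0$. Either branch therefore yields $k \geq 3(n-4)/8$, which rounds up to $\lceil 3(n-4)/8 \rceil$.

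The only point requiring care is the bookkeeping across the two branches of the piecewise formula $f(n,k)$. In particular, one should verify that the first branch---which at first glance might look too generous to force a bound of order $3n/8$---is in fact constrained by its accompanying quadratic inequality to force $k \geq 3(n-4)/8$ whenever $k \leq n/4 + 2$; this is the only comparison in the argument that is not immediate.
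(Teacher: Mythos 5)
Your proposal is correct and follows essentially the same route as the paper: plug $m=\binom{n}{2}$ into the density bound of \cref{thm:density} and solve for $k$. The only difference is that you treat the branch $k\le \frac{n}{4}+2$ explicitly via the quadratic $4k^2-(4n+2)k+n^2-n+4\le 0$, whereas the paper's one-line proof works only with the second branch, yielding $k\ge \frac{3n^2-12n+16}{8(n-3)}\ge \frac{3(n-4)}{8}$; your extra case analysis is sound and, if anything, more complete.
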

\begin{proof}
Since $K_n$ contains $\frac{1}{2}n(n-1)$ edges, \cref{thm:density} implies that its mixed page number cannot be less than $\frac{3n(n - 4)}{8 (n - 3)} \geq \frac{3n(n - 4)}{8 n}$.
\end{proof}

\begin{lemma}\label{thm:complete-upper-bound}
The mixed page number of $K_n$ is at most $2 \lceil \frac{n}{5} \rceil$.
\end{lemma}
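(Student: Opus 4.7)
The plan is to construct an explicit mixed layout of $K_n$ using $k=\lceil n/5\rceil$ stacks and $k$ queues. First, I would establish as a base case that $K_5$ admits a $1$-stack, $1$-queue layout, which is immediate by restricting the known $1$-stack, $1$-queue layout of $K_6$ from \cref{fig:mixed-sample} to any five of its vertices. Concretely, with $v_0,\ldots,v_4$ placed on the spine, the stack can carry the maximal outerplanar graph on edges $(v_0,v_1),(v_1,v_2),(v_2,v_3),(v_3,v_4),(v_0,v_2),(v_2,v_4),(v_0,v_4)$ and the queue can carry the remaining three edges $(v_0,v_3),(v_1,v_3),(v_1,v_4)$, no two of which nest.

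For general $n$, I would partition the spine $v_0\prec v_1\prec\cdots\prec v_{n-1}$ into $k$ consecutive blocks $B_0,\ldots,B_{k-1}$ of size at most $5$, and allocate one stack $S_i$ and one queue $Q_i$ per block $B_i$, for a total of $2k=2\lceil n/5\rceil$ pages. Intra-block edges are placed into the associated $S_i$ and $Q_i$ via the base layout. For each inter-block edge $(v_a,v_b)$ with $v_a\in B_i$, $v_b\in B_j$, $i<j$, I would route it either to $S_i$ or to $Q_j$ according to a rule depending only on the offset of $a$ within $B_i$ and of $b$ within $B_j$. The specific rule is chosen so that each $K_{5,5}$ bipartite graph between two blocks decomposes into a non-crossing part routed to $S_i$ and a non-nesting part routed to $Q_j$; such a decomposition exists because $5\times 5$ grids of endpoint-pairs can be covered by a small number of monotone chains of each type in the natural partial order.

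I expect the main obstacle to be verifying, simultaneously for all blocks, that every stack $S_i$ remains crossing-free and every queue $Q_i$ remains nesting-free once intra-block edges and all inter-block edges directed to them have been added. Because each block has at most five positions, this verification reduces to a bounded case analysis on offsets; the delicate point is to choose the intra-block $K_5$ layout (of which several are available) compatibly with the decomposition of every $K_{5,5}$ between two distinct blocks, so that the local intra-block edges do not obstruct the inter-block traffic landing on the same page. Once such a compatibility is exhibited, the bound of $2\lceil n/5\rceil$ pages follows, and it also explains the combinatorial role of the constant $5$: it is the largest block size for which the base $K_5$ layout leaves enough room on its stack and queue to absorb arbitrary inter-block edges while respecting both non-crossing and non-nesting constraints.
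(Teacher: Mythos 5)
Your block-based scheme breaks down at its central step: the claimed decomposition of each inter-block $K_{5,5}$ into a non-crossing part stored on $S_i$ and a non-nesting part stored on $Q_j$ does not exist. Between two blocks the edges form a separated $K_{5,5}$, and in the separated setting a single stack corresponds to a monotonically decreasing path and a single queue to a monotonically increasing path in the $5\times 5$ grid of endpoint pairs; together they cover at most $9+9-1=17$ of the $25$ edges (this is \cref{cor:bipartite-density} with $n=5$, $s=q=1$), and by \cref{lem:sep-bound} a separated $K_{5,5}$ in fact needs $\lceil \frac{2\cdot 5}{3}\rceil=4$ pages, not two. So ``a small number of monotone chains'' cannot be reduced to one chain of each type. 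The situation only worsens once you recall that the same $S_i$ must simultaneously absorb its share of the $K_{5,5}$'s towards \emph{all} later blocks: a non-crossing set of edges with left endpoints among the $5$ vertices of $B_i$ and right endpoints among the $m$ vertices to its right is a decreasing staircase with at most $m+4$ edges, and symmetrically $Q_j$ can absorb at most $5j+4$ edges arriving from its left. Summing these capacities over all blocks gives only about $10k^2$ slots for the roughly $12.5k^2$ edges of $K_{5k}$, so no routing rule --- offset-based or not, and even if you additionally allowed $Q_i$ and $S_j$ as destinations --- can make a block-local page allocation cover $K_n$. (As a further obstruction, with a rule depending only on offsets, two edges routed to $S_i$ from distinct source offsets towards two different later blocks can always be chosen so that they cross, which would confine $S_i$ to edges at a single source vertex; the symmetric statement holds for $Q_j$.)

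The paper's proof is instead one global construction: for $n$ a multiple of $5$ it fixes the order $v_0,\ldots,v_{n-1}$ and gives explicit index formulas placing $2n-4i-5$ edges on the $i$-th queue and $\frac{4n}{5}+i-4$ edges on the $i$-th stack, so every page spans essentially the whole vertex order rather than being attached to a block of five vertices; correctness is a direct check that no two edges of a stack cross and no two edges of a queue nest, together with a count showing that all edges of $K_n$ are used. In particular the constant $5$ arises from balancing this global edge count, not from a block size, so your closing intuition about the ``role of the constant $5$'' does not survive either. Your base case (a $1$-stack $1$-queue layout of $K_5$, or even of $K_6$) is correct, but it is not where the difficulty lies.
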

\begin{proof}
We assume that $n$ is a multiple of $5$ and we prove that $K_n$ admits a $\frac{n}{5}$-stack $\frac{n}{5}$-queue layout $\mathcal{L}_n$; see \cref{fig:kn}.
Let $v_0,\ldots,v_{n-1}$ be the order of the vertices of $K_n$ in $\mathcal{L}_n$. 
Since the edges of $K_n$ that connect consecutive vertices in $\mathcal{L}_n$ as well as the edge $(v_0,v_{n-1})$ can be assigned to any stack of $\mathcal{L}_n$, we omit them~in~our assignment scheme.
First, we assign edges to queues $\Qh_0, \dots, \Qh_{\frac{n}{5}-1}$  of $\mathcal{L}_n$, such that for each $i$ in $[0, \frac{n}{5}-1]$, queue $\Qh_i$ contains the following $2n-4i-5$ edges:

\medskip
\begin{tabular}{@{~--~} l @{\hspace{1.0cm}} r @{$ \;\; \leq \; j \; \leq \;\; $} l} 
 $(v_i, v_j)$                  & $i + 2$                 & $\frac{4n}{5} - i - 2$ \\
 $(v_j, v_{n - 1 - i})$        & $\frac{n}{5} + 2i $     & $ n - i - 3$ \\
 $(v_{\frac{n}{5}+2i}, v_j)$   & $\frac{4n}{5} - i - 1 $ & $ n - i - 2$ \\
 $(v_j, v_{\frac{4n}{5}-i-2})$ & $i +1 $                 & $ \frac{n}{5} + 2i$
\end{tabular}
\medskip

\begin{figure}[ph!]
	\begin{subfigure}[b]{.48\linewidth}
		\flushleft
		\includegraphics[page=1,width=\textwidth]{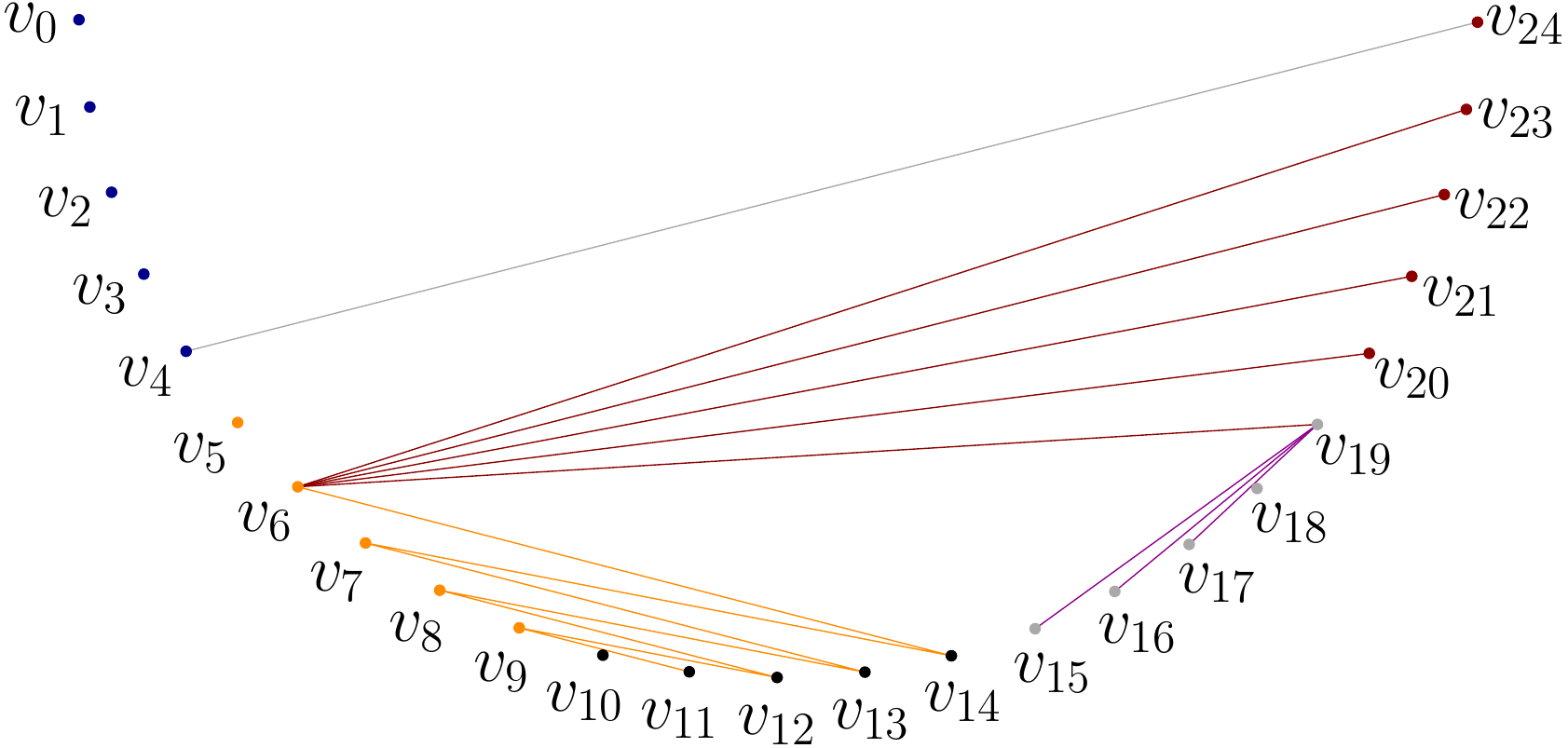}
		\caption{$\Sh_0$}
	\end{subfigure}\hfill    
	\begin{subfigure}[b]{.48\linewidth}
		\flushleft
		\includegraphics[page=2,width=\textwidth]{s-q}
		\caption{$\Sh_1$}
	\end{subfigure}
	\begin{subfigure}[b]{.48\linewidth}
		\flushright
		\includegraphics[page=3,width=\textwidth]{s-q}
		\caption{$\Sh_2$}
	\end{subfigure}\hfill
	\begin{subfigure}[b]{.48\linewidth}
		\flushleft
		\includegraphics[page=4,width=\textwidth]{s-q}
		\caption{$\Sh_3$}
	\end{subfigure}
	\begin{subfigure}[b]{.48\linewidth}
		\flushright
		\includegraphics[page=5,width=\textwidth]{s-q}
		\caption{$\Sh_4$}
	\end{subfigure}\hfill
	\begin{subfigure}[b]{.48\linewidth}
		\flushleft
		\includegraphics[page=6,width=\textwidth]{s-q}
		\caption{$\Qh_0$}
	\end{subfigure}    
	\begin{subfigure}[b]{.48\linewidth}
		\flushright
		\includegraphics[page=7,width=\textwidth]{s-q}
		\caption{$\Qh_1$}
	\end{subfigure}\hfill
	\begin{subfigure}[b]{.48\linewidth}
		\flushleft
		\includegraphics[page=8,width=\textwidth]{s-q}
		\caption{$\Qh_2$}
	\end{subfigure}
	\begin{subfigure}[b]{.48\linewidth}
		\flushright
		\includegraphics[page=9,width=\textwidth]{s-q}
		\caption{$\Qh_3$}
	\end{subfigure}\hfill
	\begin{subfigure}[b]{.48\linewidth}
		\flushleft
		\includegraphics[page=10,width=\textwidth]{s-q}
		\caption{$\Qh_4$}
	\end{subfigure}
	\caption{Illustration for the upper bound of $2\lceil\frac{n}{5}\rceil$ of \cref{thm:complete-upper-bound} with $n = 25$, which yields a $5$-stack $5$-queue layout.}
	\label{fig:kn}
\end{figure}
		
\noindent Next, we assign edges to stacks $\Sh_0, \dots, \Sh_{\frac{n}{5}-1}$ of $\mathcal{L}_n$, such that for each $i$ in $[0,\frac{n}{5}-1]$, stack $\Sh_i$ contains the following $\frac{4n}{5}+i-4$ edges (whose colors for the example of \cref{fig:kn} are indicated in the respective circles):

\definecolor{mixedred}{RGB}{139,0,0}
\definecolor{mixedgray}{RGB}{169,169,169}
\definecolor{mixedblue}{RGB}{0,0,139}
\definecolor{mixedpurple}{RGB}{139,0,139}
\definecolor{mixedorange}{RGB}{255,140,0}
\definecolor{mixedgreen}{RGB}{0,100,0}

\medskip
\begin{tabular}{ l l r @{$ \;\; \leq \; j \; \leq \;\; $} l@{~~~}l} 
\tikzcircle[mixedred, fill=mixedred]{2.8pt}       & $(v_{\frac{n}{5} + 2i + 1}, v_j) $          & $\frac{4n}{5}-i-1$  & $n - i -2 $,     &       for $(i,j)\neq(\frac{n}{5}-1,n-1)$ \\
\tikzcircle[mixedblue, fill=mixedblue]{2.8pt}     & $(v_{n - i - 1}, v_j)$                      & $\frac{n}{5}-i$     & $\frac{n}{5}-1 $  &     \\
\tikzcircle[mixedgray, fill=mixedgray]{2.8pt}     & $(v_{\frac{n}{5} - i - 1}, v_j)$            & $n - i - 1$         & $n - 1$,           &  for $i\neq (\frac{n}{5}-1)$ \\
\tikzcircle[mixedpurple, fill=mixedpurple]{2.8pt} & $(v_{\frac{4n}{5} - i - 1}, v_j)$           & $\frac{3n}{5}$      & $\frac{4n}{5} - i - 3$,& for $(i,j)\neq(\frac{n}{5}-1,\frac{n}{5}-1)$ \\
\tikzcircle[mixedorange, fill=mixedorange]{2.8pt} & $(v_{2i + j + 1}, v_{\frac{2n}{5} - j -1})$ & $\frac{n}{5} $     & $\frac{2n}{5} - i - 2$&  \\
\tikzcircle[mixedorange, fill=mixedorange]{2.8pt} & $(v_{2i + j + 2}, v_{\frac{2n}{5} - j -1})$ & $\frac{n}{5} $     & $\frac{2n}{5} - i - 3$,& for $(i,j)\neq(\frac{n}{5}-1,\frac{n}{5}-1)$ \\
\tikzcircle[mixedgreen, fill=mixedgreen]{2.8pt}   & $(v_{j}, v_{2i - j + 1})$                   & $\frac{n}{5} $     & $\frac{n}{5} + i-1$  &    \\
\tikzcircle[mixedgreen, fill=mixedgreen]{2.8pt}   & $(v_{j}, v_{2i - j})$                       & $\frac{n}{5} $     & $\frac{n}{5} + i-1$  &    \\
\end{tabular}
\medskip

\noindent By construction, no two edges in the same stack (queue) cross (nest, respectively), and no edge is assigned to two distinct pages (i.e., stacks or queues). To complete the proof, we count the total number of edges in $\mathcal{L}_n$. Summing up~over $i$, we conclude that $\mathcal{L}_n$ contains $\frac{8n^2}{25} - \frac{3n}{5}$ and $\frac{9n^2}{50} - \frac{9n}{10}$ edges in its queues and stacks, respectively. Thus, in total $\mathcal{L}_n$ has $\frac{1}{2}(n^2-3n)$ edges, i.e., the number of edges of $K_n$ neglecting the edges connecting consecutive vertices, as desired.
\end{proof}	

\section{Complete Bipartite Graphs}
\label{sec:complete-bipartite-graphs}

In this section, we study bounds on the mixed page number of the complete bipartite graph $K_{n,n}$, i.e., we assume that both parts of it are of the same cardinality. We start with the natural case, in which the vertices of one part of $K_{n,n}$ precede those of its second part; we refer to this scenario as the \emph{separated setting} (\cref{ssec:separated}). Then, we analyze the non-separated case in which no restriction on precedence of the parts is imposed (\cref{ssec:non-separated}). 

\subsection{The separated setting}
\label{ssec:separated}

Let $u_0,\ldots,u_{n-1}$ and $v_0,\ldots,v_{n-1}$ be the vertices of the two parts of $K_{n,n}$. In the separated setting, we assume w.l.o.g.\ $u_0 \prec \ldots \prec u_{n-1} \prec v_0 \prec \ldots \prec v_{n-1}$; see \cref{fig:mixed-k66}. 
Since by bipartiteness no two edges can form a necklace, any two edges either share an endpoint, cross, or nest. Furthermore, two crossing (nesting) edges are nesting (crossing), when one reverses the order of the vertices of one of its two parts. Hence, reversing the order of the vertices of one part of an $s$-stack $q$-queue layout yields a $q$-stack $s$-queue layout. In this sense, stacks and queues are \emph{interchangeable}. This property will allow us to~translate properties proved for queues to corresponding properties for stacks, and vice~versa.  

\begin{figure}[!h]
    \begin{minipage}[b]{0.58\textwidth}
    	\begin{subfigure}[b]{.48\linewidth}
    		\flushleft
    		\includegraphics[page=1,scale=0.82]{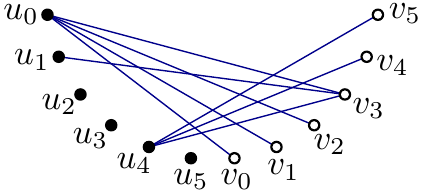}
    		\caption{$\Qh_0$}
    	\end{subfigure}\hfill
    	\begin{subfigure}[b]{.48\linewidth}
    		\flushright
    		\includegraphics[page=2,scale=0.82]{Knn_sep}
    		\caption{$\Qh_1$}
    	\end{subfigure}    
    	\begin{subfigure}[b]{.48\linewidth}
    		\flushleft
    		\includegraphics[page=3,scale=0.82]{Knn_sep}
    		\caption{$\Sh_0$}
    	\end{subfigure}\hfill
    	\begin{subfigure}[b]{.48\linewidth}
    		\flushright
    		\includegraphics[page=4,scale=0.82]{Knn_sep}
    		\caption{$\Sh_1$}
    	\end{subfigure}
    \end{minipage}\hfill
    \begin{minipage}[b]{0.38\textwidth}
        \centering
    	\begin{subfigure}[b]{.98\linewidth}
    		\flushright
    		\includegraphics[page=5,scale=0.82]{Knn_sep}
    		\caption{$H$}
            \label{fig:grid}
    	\end{subfigure}
    \end{minipage}
	\caption{(a)--(d)~A $2$-stack $2$-queue layout of $K_{6,6}$ in the separated setting, and (e)~the grid corresponding to this layout.}
	\label{fig:mixed-k66}
\end{figure}

The key ingredient of our approach is a one-to-one mapping of the $n^2$ edges of~$K_{n,n}$ to the $n^2$ points of the $n \times n$ integer grid $\grid = [0,n-1]\times[0,n-1]$, such that the edge $(u_i,v_j)$ of~$K_{n,n}$ is mapped to point $(i,j)$ of $\grid$; see \cref{fig:grid}. Observe that the interchangeability property just mentioned corresponds to mirroring $H$ either along the $x$-axis or along the $y$-axis, depending on the part being reversed.
In the following, we introduce an important property of this mapping. 

\begin{proposition}\label{prp:sep-matrix}
Let $\mathcal{L}$ be an $s$-stack $q$-queue layout of $K_{n,n}$ in the separated setting. Then, the edges assigned to the same queue (stack) of $\mathcal{L}$ form a not necessarily strict monotonically increasing (decreasing, respectively) path on $\grid$.
\end{proposition}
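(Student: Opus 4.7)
My plan is to reduce the claim to a characterization of when two edges of $K_{n,n}$ nest, cross, or share an endpoint in the separated setting, expressed through their images on the grid $\grid$. Consider two edges $(u_i,v_j)$ and $(u_{i'},v_{j'})$ with images $(i,j)$ and $(i',j')$. A straightforward case analysis on the signs of $i-i'$ and $j-j'$ shows that, when both are nonzero, the edges nest iff $(i-i')(j-j') < 0$ and cross iff $(i-i')(j-j') > 0$; when either difference is zero, the edges share an endpoint (and so neither nest nor cross). Recall that the text has already observed that no two edges of $K_{n,n}$ can form a necklace in the separated setting, so these are the only possibilities.

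For the queue direction, let $\Qh$ be a queue of $\mathcal{L}$ and order the images of the edges in $\Qh$ lexicographically, first by $i$-coordinate with ties broken by $j$-coordinate. The non-nesting property combined with the characterization above guarantees that whenever two edges in $\Qh$ satisfy $i < i'$ their $j$-coordinates satisfy $j \le j'$, and the tie-breaking rule handles the equality case. Hence the resulting sequence of grid points $(i_1,j_1),\ldots,(i_k,j_k)$ is weakly monotonically increasing in both coordinates, which is precisely the claimed increasing path on $\grid$.

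For the stack direction, I would invoke the interchangeability property stated immediately before the proposition: reversing the order of the $v$-part transforms stacks into queues and mirrors $\grid$ along one axis, thereby turning weakly monotonically increasing paths into weakly monotonically decreasing ones. Alternatively, one can argue directly: for edges in a stack the non-crossing property yields $(i-i')(j-j') \le 0$, so sorting by increasing $i$ (breaking ties by decreasing $j$) produces a sequence with weakly decreasing $j$. The only subtle point throughout is handling edges that share an endpoint, which is why the lexicographic tie-breaking is needed; beyond that the proof is routine and I anticipate no serious obstacle.
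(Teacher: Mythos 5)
Your proposal is correct and follows essentially the same route as the paper's proof: order the edges of a queue lexicographically and use the fact that, in the separated setting, non-nesting forces any later edge into the (weak) upper-right quadrant, with the stack case handled by the interchangeability of stacks and queues. Your explicit sign characterization of crossing/nesting and the optional direct argument for stacks are just slightly more detailed renderings of the same idea.
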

\begin{proof}
We prove the statement for the edges assigned to the same queue of $\mathcal{L}$; by the interchangeability of stacks and queues in the separated setting, a symmetric argument applies to the edges of the same stack of $\mathcal{L}$. We consider the edges assigned to the same queue in their lexicographic order, namely 
edge $(u_i,v_j)$ \emph{precedes} another edge $(u_k,v_\ell)$ if and only if $i=k$ and $j < \ell$ holds, or $i < k$ and $j \leq \ell$ holds. Thus, in this order, we can observe that for any edge $(u_i,v_j)$ any subsequent edge in the lexicographic order $(u_k,v_\ell)$ is mapped to a point $(k,\ell)$ in the upper-right quadrant of $(i,j)$ in $\grid$ (see \cref{fig:grid}), i.e., the edges assigned to the same queue of $\mathcal{L}$ form a monotonically increasing path of $\grid$. 
\end{proof}

\noindent In view of \cref{prp:sep-matrix}, we refer in the following to a monotonically decreasing (increasing) path of $\grid$ as a \emph{stack-path} (\emph{queue-path}, respectively).

\begin{proposition}
A stack-path starting at $(i,j)$ and ending at $(k,\ell)$ covers~at~most $k-i + j - \ell + 1$ grid points; a corresponding queue-path at most $k-i + \ell - j + 1$.
\end{proposition}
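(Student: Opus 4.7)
\medskip\noindent\textbf{Proof proposal.} The plan is to bound the length of a stack- or queue-path via a telescoping argument on the $L^1$ distance between consecutive grid points, where the monotonicity of the path forces each coordinate-wise increment to contribute additively to the total displacement between the endpoints.

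First I would handle the queue-path case. A queue-path visits a sequence of grid points $(a_0,b_0), (a_1,b_1), \dots, (a_t,b_t)$ with $(a_0,b_0)=(i,j)$ and $(a_t,b_t)=(k,\ell)$ such that the points are strictly increasing in lexicographic order with both coordinates non-decreasing (this is exactly the ordering extracted in the proof of \cref{prp:sep-matrix}). Consequently, for every index $r \in \{1,\dots,t\}$ we have $a_r - a_{r-1} \geq 0$, $b_r - b_{r-1} \geq 0$, and not both equal to zero, so
\[
(a_r - a_{r-1}) + (b_r - b_{r-1}) \geq 1.
\]
Summing these inequalities over $r$, the left-hand side telescopes to $(a_t-a_0) + (b_t-b_0) = (k-i) + (\ell-j)$, so $t \leq (k-i) + (\ell - j)$, which means the path contains at most $(k-i)+(\ell-j)+1$ grid points.

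For the stack-path case, I would invoke the interchangeability property noted just before \cref{prp:sep-matrix}: reversing the order of the vertices of one part swaps stacks and queues and corresponds to mirroring $\grid$ along the $y$-axis, sending any stack-path from $(i,j)$ to $(k,\ell)$ (with $i\leq k$, $j\geq \ell$) to a queue-path from $(i,n-1-j)$ to $(k,n-1-\ell)$. Applying the queue-path bound already established gives at most $(k-i) + ((n-1-\ell)-(n-1-j)) + 1 = (k-i) + (j-\ell) + 1$ grid points, as required. Alternatively, one can just repeat the telescoping argument directly, observing that for a stack-path the $y$-coordinate is non-increasing while the $x$-coordinate is non-decreasing, so each step contributes at least $1$ to $(a_r - a_{r-1}) + (b_{r-1} - b_r)$.

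I do not expect any substantive obstacle here: the content of \cref{prp:sep-matrix} already packages the monotonicity, and the bound follows by a one-line telescoping sum. The only point requiring a sentence of care is ensuring that consecutive points on the path are \emph{distinct}, so that each step contributes at least one unit to the total displacement; this is immediate from the lexicographic strictness used to define the ordering of edges on a page.
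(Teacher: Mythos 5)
Your argument is correct and is essentially the paper's own proof: the paper simply states that the bound follows from monotonicity together with the Manhattan distance $|k-i|+|\ell-j|$ between the endpoints, and your telescoping sum is exactly the routine elaboration of that observation. No substantive difference in approach.
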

\begin{proof}
Directly follows from monotonicity and from the fact that the Manhattan distance between $(i,j)$ and $(k,\ell)$ is $|k-i|+|\ell-j|$.
\end{proof}

\noindent With the property of \cref{prp:sep-matrix} in mind, we are now ready to introduce our upper bound on the mixed page number of $K_{n,n}$ in the separating setting.

\begin{lemma}\label{lem:sep-upper-bound}
The mixed separated number of $K_{n, n}$ is at most $\lceil\frac{2n}{3}\rceil$.
\end{lemma}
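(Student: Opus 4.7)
The plan is to establish the bound by constructing an explicit mixed layout of $K_{n,n}$ in the separated setting using at most $\lceil 2n/3 \rceil$ pages. By \cref{prp:sep-matrix}, this reduces to partitioning the $n \times n$ grid $\grid$ into at most $\lceil 2n/3 \rceil$ chains, each of which is either a queue-path (weakly monotonically increasing in both coordinates) or a stack-path (weakly monotonically decreasing in one). Since the interchangeability property lets us freely trade stacks for queues by reversing one part of the vertex order, the target decomposition will be (roughly) balanced.

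The construction I would use employs approximately $\lceil n/3 \rceil$ queue-paths $Q_0, \ldots, Q_{\lceil n/3\rceil - 1}$ together with approximately $\lceil n/3 \rceil$ stack-paths $S_0, \ldots, S_{\lceil n/3\rceil - 1}$. Each queue-path is a ``staircase'' sweeping from the upper-left corner of $\grid$ toward the lower-right, while each stack-path is a complementary staircase sweeping from the upper-right toward the lower-left; each such staircase has Manhattan length roughly $3n/2$ so that the total coverage equals $n^2$. The two families are interlocked: the assignment of a cell $(i,j)$ to a particular queue or stack is determined by a closed-form rule in terms of the diagonal $d = j-i$ and antidiagonal $a = i+j$ of $(i,j)$ (or equivalently via $(i \bmod 3, j \bmod 3)$), and the overlaps at intersection cells between a queue-band and a stack-band are resolved by a consistent tie-breaking rule so that every cell of $\grid$ is placed into exactly one chain. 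For $n$ not divisible by $3$, a small local modification absorbs the few leftover cells into an existing staircase without exceeding the claimed bound.

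The verification consists of three items. First, for each $k$, sort the cells of $Q_k$ lexicographically and check that the column indices are weakly increasing; by \cref{prp:sep-matrix}, this certifies $Q_k$ is a valid queue-path. Second, perform the symmetric check for each $S_k$ to certify it as a valid stack-path. Third, confirm that the $Q_k$'s and $S_k$'s together partition $\grid$ and that their total number is at most $\lceil 2n/3 \rceil$. The main obstacle will be the combinatorial bookkeeping at the ``turns'' of each staircase and at the interfaces where a queue-staircase is adjacent to a stack-staircase, since this is where monotonicity within each chain is most easily violated. Handling the residue $n \bmod 3$ with the small local adjustment is an additional but routine step that does not affect the overall argument.
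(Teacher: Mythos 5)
Your overall plan coincides with the paper's: use \cref{prp:sep-matrix} to translate the problem into covering the $n\times n$ grid $\grid$ with $\frac{n}{3}$ queue-paths and $\frac{n}{3}$ stack-paths, and handle $n\not\equiv 0 \pmod 3$ by adding the $n \bmod 3$ leftover rows/columns as extra pages. However, the heart of the lemma is the explicit path family, and your proposal never actually produces one: the assignment is described only as ``a closed-form rule in terms of the diagonal and antidiagonal \ldots resolved by a consistent tie-breaking rule,'' with staircases of length ``roughly $3n/2$.'' This is precisely where there is no room for hand-waving. With $s=q=\frac{n}{3}$, the counting in \cref{cor:bipartite-density} gives $2n(s+q)-s^2-q^2-sq=n^2$ exactly, so any valid family is extremal: every path must have exactly the right length and every queue-path/stack-path pair must interact in exactly the prescribed way. ``Roughly'' and ``routine bookkeeping at the turns'' cannot substitute for exhibiting the paths and verifying monotonicity and coverage, which is essentially the entire content of the lemma.

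Moreover, the concrete hints you do give point in a direction that does not obviously work. An assignment by residues $(i\bmod 3, j\bmod 3)$ cannot define the chains, since each residue class has $n^2/9$ cells and is far from being a single monotone path. The paper's construction instead uses a ``windmill'' of L-/Z-shaped paths, each covering exactly $\frac{5n}{3}$ grid points (not $\frac{3n}{2}$), arranged so that every stack-path and every queue-path share exactly one cell; the resulting $\frac{10n^2}{9}$ covered cells minus the $\frac{n^2}{9}$ doubly covered intersection cells account for all $n^2$ edges, and a doubly covered edge is simply assigned to one of the two pages. Your proposal instead posits a genuine partition into disjoint staircases of length $3n/2$; whether such a disjoint decomposition into $\frac{n}{3}$ increasing and $\frac{n}{3}$ decreasing paths even exists is not established, and no candidate is specified that one could check. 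As it stands, the proof has a genuine gap: the construction it promises is missing.
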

\begin{proof}
Assuming that $n$ is a multiple of $3$, we prove that $K_{n, n}$ admits a $\frac{n}{3}$-stack $\frac{n}{3}$-queue layout $\mathcal{L}_n$. 
By \cref{prp:sep-matrix}, this is equivalent to determining how all $n^2$ points of the integer grid $H$ can be covered with a set $Q$ of $\frac{n}{3}$ queue-paths and a set $S$ of $\frac{n}{3}$ stack-paths; see \cref{fig:windmill}.
For $0 \leq i \leq \frac{n}{3}-1$, the $i$-th path of $S$ covers the following $\frac{5n}{3}$ points of $\grid$: 
$(0,n-1-i), \ldots, (\frac{2n}{3}-1-i,n-1-i), \ldots, (\frac{2n}{3}-1-i,\frac{n}{3}-1-i), \ldots, (n-1,\frac{n}{3}-1-i)$.
For each $0 \leq i \leq \frac{n}{3}-1$, the $i$-th path of $Q$ covers the following $\frac{5n}{3}$ points of $\grid$: 
$(i,0), \ldots, (i,\frac{2n}{3}-1-i),  \ldots,(\frac{2n}{3}+i,\frac{2n}{3}-1-i), \ldots, (\frac{2n}{3}+i,n-1)$.
So, in total the paths in~$S$~and~$Q$ cover $\frac{10n^2}{9}$ points, and since every pair of a path in $S$ and a path in $Q$ has exactly one point of $\grid$ in common, $\frac{n^2}{9}$ points have been double-counted. Thus, all $n^2$ points of $\grid$ have been covered, which implies that all edges of $K_{n,n}$ are present in $\mathcal{L}_n$. 

If $n$ is not a multiple of~$3$, we first compute a $\frac{n'}{3}$-stack $\frac{n'}{3}$-queue layout $\mathcal{L}_{n'}$ for $K_{n',n'}$ with $n'=3\lfloor\frac{n}{3}\rfloor$. We augment $\mathcal{L}_{n'}$ to a layout of $K_{n,n}$ using $n\mod 3$ additional stacks or queues, which completes the proof. 
\end{proof}

\begin{figure}[!tb]
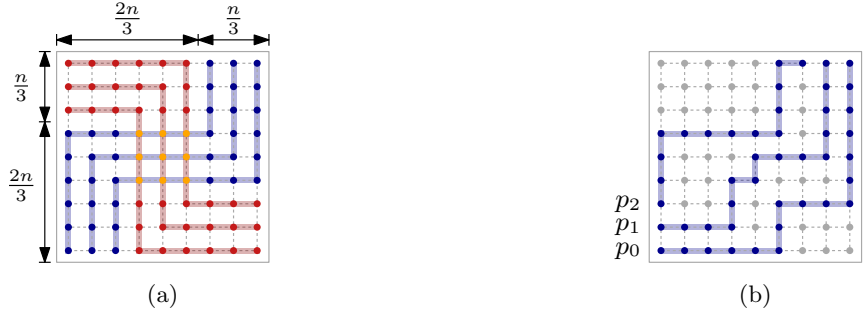

	\begin{subfigure}[b]{.48\linewidth}
		\centering
		\includegraphics[page=8,scale=1.1]{Knn_sep}
		\caption{}
		\label{fig:windmill}
	\end{subfigure}\hfill
	\begin{subfigure}[b]{.48\linewidth}
		\centering
		\includegraphics[page=9, scale=1.1]{Knn_sep}
		\caption{}
		\label{fig:queue-paths}
	\end{subfigure}  
	\caption{Illustrations for the proofs of 
	(a)~\cref{lem:sep-upper-bound} in which every stack-path shares exactly one point with every queue-path, and 
	(b)~\cref{prp:queue-paths} in which the queue-paths $p_0\ldots,p_{q-1}$ cover a largest set in $H$.}
	\label{fig:sep-bounds}
\end{figure}

\noindent Next, we focus on a corresponding lower bound for $K_{n, n}$. For this, we need~one more ingredient that we formally prove for  queue-paths and by the interchangeability of stacks and queues translates also to stack-paths (\cref{prp:queue-paths,prp:stack-paths}). 

\begin{proposition}\label{prp:queue-paths}
Let $Q$ be a set of $q$ queue-paths covering a largest set $P(Q)$ of points of $\grid$. 
Then, $P(Q)$ can be covered by $q$ queue-paths $p_0\ldots,p_{q-1}$, where~$p_i$ has length $2n-1-2i$, starts at $(0,i)$ and ends at $(n-1-i,n-1)$; $i = 0, \ldots, q-1$.
\end{proposition}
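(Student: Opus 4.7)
The plan is to prove the proposition in two stages: first, determine the size of $P(Q)$ by antichain counting; second, construct the canonical paths by induction on $q$.

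For the size bound, I would observe that each anti-diagonal $A_k = \{(x,y) \in \grid : x + y = k\}$ is an antichain in the upper-right partial order on $\grid$. Hence any queue-path (being a chain) meets $A_k$ in at most one point, so $|P(Q) \cap A_k| \leq \min(|A_k|, q)$. Summing over $k \in \{0, \ldots, 2n-2\}$ yields
\[
|P(Q)| \leq \sum_{k=0}^{2n-2} \min(|A_k|, q) = q(2n-q) = \sum_{i=0}^{q-1}(2n-1-2i),
\]
which matches the total length of the proposed canonical paths. Equality in this bound forces $|P(Q) \cap A_k| = |A_k|$ whenever $|A_k| \leq q$; in particular the full antichains $A_0, \ldots, A_{q-1}$ and $A_{2n-1-q}, \ldots, A_{2n-2}$ near the two corners lie inside $P(Q)$, which guarantees that all prescribed endpoints $(0, i)$ and $(n-1-i, n-1)$ of the canonical paths already belong to $P(Q)$.

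Next I would prove the existence of the canonical paths by induction on $q$. The base case $q = 1$ is immediate: the only queue-path in $\grid$ of length $2n-1$ is a full staircase from $(0, 0)$ to $(n-1, n-1)$. For the inductive step, I would extract a canonical $p_0$ from $Q$ via a greedy exchange: starting at $(0,0) \in A_0 \cap P(Q)$, successively choose for each $k \in \{1, \ldots, 2n-2\}$ one point of $A_k \cap P(Q)$ that is weakly upper-right of the previously chosen point, terminating at $(n-1, n-1) \in A_{2n-2} \cap P(Q)$. The saturation counts $|P(Q) \cap A_k| \geq q$ for $k \in [q, 2n-2-q]$ together with the full saturation of the small corner antichains provide enough flexibility for the greedy routing to succeed, yielding a path $p_0$ of length exactly $2n-1$ that visits every anti-diagonal. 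After removing $p_0$'s contribution, the residual cover uses $q-1$ queue-paths whose antichain intersection counts match those of a maximum $(q-1)$-cover on the restricted grid, so the inductive hypothesis produces $p_1, \ldots, p_{q-1}$ with the stated endpoints and lengths $2n-3, \ldots, 2n-1-2(q-1)$.

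The main obstacle will be rigorously justifying the exchange step that produces $p_0$ and reduces to a smaller instance, since the paths in $Q$ may interleave in complex ways. Concretely, I need to show that after greedily routing $p_0$, the remaining paths of $Q$ (possibly after local swaps with $p_0$ on antichains they share) still cover $P(Q) \setminus p_0$ optimally. I would handle this by a standard local swap between $p_0$ and any path of $Q$ that crosses it: such swaps preserve both chain monotonicity and the collection's total coverage, and can be iterated until $p_0$ is disjoint from the remaining paths. The saturation equalities derived in the first stage are essential, as they guarantee that each swap has enough room on neighboring antichains to complete without uncovering any point of $P(Q)$.
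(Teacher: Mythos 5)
Your first stage is sound and is in fact a cleaner route to the cardinality statement than the paper's: anti-diagonals $A_k$ are antichains of the dominance order, each queue-path meets each $A_k$ at most once, and comparing with an explicit disjoint cover of size $\sum_{i=0}^{q-1}(2n-1-2i)$ (e.g.\ the L-shaped paths used in the paper) gives $|P(Q)|=q(2n-q)$ together with the saturation equalities $|P(Q)\cap A_k|=\min(|A_k|,q)$. (Two small repairs: you must exhibit such a cover before you may speak of ``equality''; and the claim that the greedy never gets stuck needs an argument, not ``enough flexibility'' --- it does follow, because if both $(x+1,y)$ and $(x,y+1)$ were missing, the $\min(|A_{k+1}|,q)$ points of $P(Q)\cap A_{k+1}$ are all incomparable with $(x,y)$, which either violates saturation or produces a $(q+1)$-antichain inside a set covered by $q$ chains.)

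The genuine gap is the inductive step, which you yourself flag as the main obstacle but do not resolve, and which fails as described. First, the ``restricted grid'' is undefined: $P(Q)\setminus p_0$ is not a grid, and the proposition for $q-1$ paths prescribes starts $(0,0),\dots,(0,q-2)$, not the shifted starts $(0,1),\dots,(0,q-1)$ and ends $(n-2,n-1),\dots$ that the residual requires, so the inductive hypothesis does not apply verbatim. Second, and more seriously, for an unconstrained greedy $p_0$ the residual need not be coverable by $q-1$ queue-paths at all. Take $n=4$, $q=2$ and let $P(Q)$ be the union of the two disjoint queue-paths $(0,0),(0,1),(0,2),(0,3),(1,3),(2,3),(3,3)$ and $(1,0),(1,1),(1,2),(2,2),(3,2)$, a maximum $12$-point set. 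The greedy may legitimately route $p_0=(0,0),(0,1),(1,1),(1,2),(1,3),(2,3),(3,3)$, leaving the residual $\{(1,0),(0,2),(0,3),(2,2),(3,2)\}$, which contains the incomparable pair $(1,0),(0,2)$ and hence is not a single chain; note also that the prescribed start $(0,1)$ of $p_1$ has been consumed by $p_0$. Consequently, no sequence of swaps that keeps this $p_0$ fixed and only re-chains the other paths can succeed, so the proposed ``swap until $p_0$ is disjoint from the remaining paths'' does not repair the step; if the swaps are allowed to modify $p_0$, the argument reduces to asserting the existence of a good first path, which is exactly what must be proved. This is precisely where the paper does the real work: it peels off the path through the \emph{rightmost} uncovered point of each row, uses maximality of $Q$ to pin the endpoints $(0,i)$ and $(n-1-i,n-1)$, exploits that the uncovered region keeps a monotone boundary, and closes coverage with a counting/disjointness argument. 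Your plan needs an analogous canonical choice of $p_0$ (or of all $p_i$ simultaneously) plus a proof that this particular choice leaves a residual to which the induction applies; as written, that content is missing.
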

\begin{proof}
Since each path in $Q$ is a queue-path, by \cref{prp:sep-matrix} the boundary of $P(Q)$ is formed by two sets of points, each of which is monotonically increasing. Assume that, for some $i$ in $\{0, \ldots,q-1\}$, queue-paths $p_0,\ldots,p_{i-1}$ have been computed, such that for each $j$ in $\{0,\ldots,i-1\}$ queue-path $p_j$ has length $2n-1-2j$, starts at $(0,j)$, ends at $(n-1-j,n-1)$, and $p_0,\ldots,p_{i-1}$ cover all points of $P(Q)$ whose $x$- and $y$-coordinates are in $\{n-i,\ldots,n-1\}$ and in $\{0,\ldots,i-1\}$, respectively.
Since each of the paths $p_0,\ldots,p_{i-1}$ is a queue-path, the boundary of the set $P(Q_i)$ of points in $P(Q)$ that have not been covered by $p_0,\ldots,p_{i-1}$ is formed by two sets of points, each of which is monotonically~increasing. Also, by assumption, $P(Q_i)$ does not contain any point whose $x$-coordinate is in $\{n-i,\ldots,n-1\}$ or whose $y$-coordinate is in $\{0,\ldots,i-1\}$.
On the other hand, $P(Q_i)$ necessarily contains at least one point whose $x$-coordinate is $n-1-i$ and one whose $y$-coordinate is $i$, 
as otherwise a path of $Q$ can be elongated to contain such point contradicting the fact that $Q$ covers a largest set of points of $\grid$. For $j=i,\ldots,n-i-1$, the next queue-path $p_i$ contains the rightmost point of $P(Q_i)$ whose $y$-coordinate is $j$.
This uniquely defines a queue-path of length $2n-1-2i$ starting at $(0,i)$ and ending at $(n-1-i,n-1)$ that covers all points of $P(Q)$ whose $x$-coordinate is $n-1-i$ and whose $y$-coordinate is $i$; see \cref{fig:queue-paths}.

We next argue that $p_0,\ldots,p_{q-1}$ are pairwise disjoint and cover all points of $P(Q)$. Since $\sum_{i=0}^{q-1}2n-1-2i$ is an upper bound on the cardinality of $P(Q)$, if we give a point set of $\grid$ that achieves this bound, then we have proved that any maximal point set $P(Q)$ must have this cardinality. 
Indeed, the following set of queue-paths are pairwise-disjoint and cover exactly $\sum_{i=0}^{q-1}2n-1-2i$ points of~$\grid$: 
for each $0 \leq i \leq q-1$, the $i$-th path covers points $(i,0), \ldots, (i,n-1-i), \ldots,(n-1,n-1-i)$, i.e., $2n-1-2i$ points. 
This implies that in any maximal point set (including $P(Q)$), which is covered by a set of paths of these specific lengths, the paths must be pairwise disjoint, as we claimed. This further implies that even if the original $q$ queue-paths that cover $P(Q)$ are not pairwise disjoint, there exist subpaths $p'_0,\ldots,p'_{q-1}$ of them that are necessarily pairwise disjoint and still cover $P(Q)$. W.l.o.g., we assume that $p'_0,\ldots,p'_{q-1}$ are ordered from the lowest one to the highest one in the boundary of $P(Q)$. Hence, our construction guarantees that for each $i=0,\ldots,q-1$, path $p_i$ will completely contain $p'_i$. Hence, all points of $P(Q)$ are covered by $p_0,\ldots,p_{q-1}$, as claimed.
\end{proof}

\noindent Symmetrically, the following proposition can be proven. 

\begin{proposition}\label{prp:stack-paths}
Let $S$ be a set of $s$ stack-paths covering a largest set $P(S)$ of points of $\grid$. 
Then, $P(S)$ can be covered by $s$ stack-paths $p_0\ldots,p_{s-1}$, where~$p_i$ has length $2n-1-2i$, starts at $(0,n-1-i)$ and ends at $(n-1,i)$; $i = 0, \ldots, s-1$.
\end{proposition}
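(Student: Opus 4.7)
The plan is to mirror the proof of Proposition~\ref{prp:queue-paths} by invoking the interchangeability of stacks and queues in the separated setting. Concretely, reversing the order of the vertices of one part of $K_{n,n}$ swaps crossings with nestings (as observed at the start of \cref{ssec:separated}) and, at the level of the grid, is implemented by the involution $\sigma : H \to H$ with $\sigma(x,y) = (x, n-1-y)$. Under $\sigma$, a non-decreasing sequence in the $y$-coordinate becomes a non-increasing sequence while the $x$-coordinate is untouched, so $\sigma$ sends each stack-path bijectively to a queue-path of the same length, and vice versa; moreover, as a bijection of $H$, it preserves cardinalities of point sets.

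The first step is to apply $\sigma$ to $S$ and observe that $\sigma(S)$ is a set of $s$ queue-paths covering $\sigma(P(S))$, with $|\sigma(P(S))| = |P(S)|$. Since $\sigma$ also establishes a one-to-one correspondence between configurations of $s$ stack-paths and configurations of $s$ queue-paths, maximality transfers: $\sigma(P(S))$ is a largest set of points of $H$ coverable by $s$ queue-paths.

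The second step is to apply \cref{prp:queue-paths} to $\sigma(P(S))$, obtaining queue-paths $q_0,\ldots,q_{s-1}$ that cover $\sigma(P(S))$, where $q_i$ has length $2n-1-2i$ and the prescribed endpoints on the left and top sides of $H$. Pulling back through $\sigma$, the paths $p_i := \sigma(q_i)$ are stack-paths covering $P(S) = \sigma(\sigma(P(S)))$, each inheriting length $2n-1-2i$. The third and final step is bookkeeping: translating the endpoints of each $q_i$ under $\sigma$ yields the endpoints of $p_i$ claimed in the statement, with the index $i$ now parameterizing the nested stack-paths from the outermost ($i=0$, spanning opposite corners) to the innermost.

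The main obstacle will be in this last step, namely matching the specific endpoints. Since $\sigma$ is a single reflection while \cref{prp:queue-paths} is itself stated modulo the diagonal symmetry of queue-paths (both $(x,y)\leftrightarrow(y,x)$ and the choice of which pair of opposite sides the queue-paths connect), one has to either adjust the choice of reflection or reindex the decomposition of $\sigma(P(S))$ — for instance by applying a $90^{\circ}$ rotation instead of $\sigma$, which also swaps stack-paths with queue-paths — so that the pulled-back stack-paths start on the left side and end on the right side of $H$, as prescribed. Alternatively, the entire inductive construction of \cref{prp:queue-paths} can be rerun verbatim with the roles of monotonically increasing and decreasing paths interchanged, which confirms the statement without appealing to any global symmetry.
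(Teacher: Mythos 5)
Your approach coincides with the paper's: the paper proves \cref{prp:stack-paths} by a one-line appeal to symmetry with \cref{prp:queue-paths}, i.e.\ exactly the interchangeability/reflection argument you spell out, so in substance your proposal is fine. The obstacle you flag in your last step is, however, real and worth making explicit: your reflection $\sigma(x,y)=(x,n-1-y)$ turns the canonical queue-paths of \cref{prp:queue-paths} into stack-paths from $(0,n-1-i)$ to $(n-1-i,0)$, while a $90^\circ$ rotation turns them into stack-paths from $(i,n-1)$ to $(n-1,i)$; the endpoints written in \cref{prp:stack-paths} mix the start of the first family with the end of the second, and no symmetry (nor rerunning the induction with increasing and decreasing interchanged) will produce them, because for $i\geq 1$ they are not achievable at all. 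For instance, with $n=3$ and $s=2$, the path $p_1$ from $(0,1)$ to $(2,1)$ of length $3$ is forced to be the middle row, which every stack-path from $(0,2)$ to $(2,0)$ of length $5$ must meet, so two such paths cover at most $7$ points, whereas $|P(S)|=5+3=8$. Hence you should prove (as your argument naturally does) the correctly reflected statement, with $p_i$ of length $2n-1-2i$ from $(0,n-1-i)$ to $(n-1-i,0)$, or equivalently the rotated one from $(i,n-1)$ to $(n-1,i)$; this is all the subsequent counting lemma needs, since it only uses the path lengths and the fact that each canonical stack-path shares a grid point with each canonical queue-path.
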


\noindent The next lemma provides an estimation on the maximum number edges of an $s$-stack $q$-queue layout of $K_{n,n}$ in the separated setting. 

\begin{lemma}
Let $S$ and $Q$ be two sets of $s$ stack-paths and $q$ queue-paths covering a largest set of points of $\grid$. Then, $S \cup Q$ covers $\sum_{i=0}^{s-1} (2n - 1 -2i) + \sum_{i=0}^{q-1} (2n - 1 -2i) - sq$ grid points of $\grid$.
\end{lemma}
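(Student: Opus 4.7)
My plan is to combine the inclusion–exclusion identity with the tight individual bounds from Propositions~\ref{prp:stack-paths} and~\ref{prp:queue-paths}, and then analyze the overlap $P(S)\cap P(Q)$ via a discrete crossing argument. I would begin with
\[
|P(S)\cup P(Q)| \;=\; |P(S)| + |P(Q)| - |P(S)\cap P(Q)|.
\]
Propositions~\ref{prp:stack-paths} and~\ref{prp:queue-paths} give $|P(S)|\le \sum_{i=0}^{s-1}(2n-1-2i)$ and $|P(Q)|\le \sum_{j=0}^{q-1}(2n-1-2j)$, with equality attained by the canonical families described there. A short exchange argument shows that any joint-maximum pair $(S,Q)$ must individually attain both bounds: if not, one could replace the suboptimal family by a canonical counterpart covering a superset of its point set, not decreasing the union.

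The heart of the proof is then to show $|P(S)\cap P(Q)| = sq$. For the lower bound $\ge sq$, I would argue that any canonical stack-path $p_s^i$ from $(0,n-1-i)$ to $(n-1,i)$ shares at least one grid point with any canonical queue-path $p_q^j$ from $(0,j)$ to $(n-1-j,n-1)$. Since $p_s^i$ is weakly $y$-decreasing (from $n-1-i$ down to $i$) while $p_q^j$ is weakly $y$-increasing (from $j$ up to $n-1$), at $x=0$ the stack lies weakly above the queue, whereas at $x=n-1-j$ the queue has reached $y=n-1$, strictly above the stack's height there; a discrete intermediate-value argument then forces the two monotone staircases to meet at a grid point. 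Because canonical paths within each family are pairwise disjoint (as established in the proof of Proposition~\ref{prp:queue-paths} and symmetrically via stack–queue interchangeability), the $sq$ meeting points across different pairs $(i,j)$ lie on distinct canonical paths in both families and are therefore all distinct.

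For the matching upper bound $|P(S)\cap P(Q)|\le sq$, and equivalently for the constructive lower bound on $|P(S\cup Q)|$, I would exhibit a generalized windmill configuration extending Lemma~\ref{lem:sep-upper-bound}: arrange the canonical $s$ stack-paths and $q$ queue-paths as complementary L-shapes so that each stack–queue pair crosses transversally at a single grid point. Counting points then yields overlap exactly $sq$, and plugging into inclusion–exclusion gives the claimed total $s(2n-s)+q(2n-q)-sq$.

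The main obstacle I anticipate is the discrete crossing argument on the grid: because monotone lattice paths can contain axis-parallel segments, two such paths with opposite orientations might, a priori, share an entire horizontal or vertical segment rather than a single grid point, which would inflate the intersection past $sq$. Overcoming this requires exploiting the flexibility in the canonical constructions of Propositions~\ref{prp:stack-paths} and~\ref{prp:queue-paths} to choose representatives whose stack–queue pairs cross transversally at unique grid points, verified by a case analysis on the relative magnitudes of $i$, $j$, and $n$.
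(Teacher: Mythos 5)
The first half of your plan coincides with the paper's proof: reduce to the canonical families of \cref{prp:queue-paths} and \cref{prp:stack-paths}, argue that each canonical stack-path meets each canonical queue-path in a grid point (your discrete intermediate-value argument fleshes out what the paper dismisses as ``clear''; note it needs $i+j\le n-1$ for the endpoint comparison at $x=0$), use pairwise disjointness within each family to get $|P(S)\cap P(Q)|\ge sq$, and conclude by inclusion--exclusion. Be aware, however, that this yields only the inequality $|P(S)\cup P(Q)|\le \sum_{i=0}^{s-1}(2n-1-2i)+\sum_{i=0}^{q-1}(2n-1-2i)-sq$, and that is in fact all the paper proves and all that is used downstream (\cref{cor:bipartite-density} and \cref{lem:sep-bound} need only an upper bound on the number of covered points).

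The genuine gap is in your second half, where you try to match the bound exactly by proving $|P(S)\cap P(Q)|\le sq$ via a ``generalized windmill'' in which every stack--queue pair crosses in exactly one point while both families remain individually maximum. No such configuration exists in general, and the exact count claimed by the statement can fail: already for $n=2$, $s=q=1$ the formula evaluates to $5>4=n^2$, so it cannot be attained, and a direct check of the $2\times 2$ grid shows that \emph{every} maximum stack-path and \emph{every} maximum queue-path share exactly two points, so $|P(S)\cap P(Q)|\le sq$ is false for maximum families. (The windmill of \cref{lem:sep-upper-bound} escapes this because its L-shaped paths have length $\frac{5n}{3}$, not the maximum lengths $2n-1-2i$.) A related soft spot is your exchange argument that a jointly maximum pair is individually maximum: it presupposes that an arbitrary family can be extended pointwise to one of maximum coverage, which is not always possible (two queue-paths both through $(0,1)$ in the $2\times 2$ grid can never be extended to cover $(1,0)$); the paper's own ``extend them to maximal'' remark glosses over the same issue, so this is a shared informality rather than a defect unique to your write-up. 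In short: keep your first half, drop the equality half, and state and use the lemma as an upper bound.
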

\begin{proof}
W.l.o.g., we assume that $S$ and $Q$ are maximal, as otherwise we can simply extend them to maximal. Then, the sets $P(S)$ and $P(Q)$ of the points~of~$\grid$ that are covered by the paths in $S$ and $Q$, respectively, can be covered by $s$ stack-paths and $q$ queue-paths that have the properties of the last two propositions. Clearly, each such stack-path and each such queue-path have at least one point of $\grid$ in common. Therefore, $|P(S) \cap P(Q)| \geq sq$ holds, and thus we obtain:  
%
\begin{equation*}
\begin{split}
|P(S) \cup P(Q)| & \leq |P(S)| + |P(Q)| - |P(S) \cap P(Q)|   \\
                 & = \sum_{i=0}^{s-1} (2n - 1 -2i) + \sum_{i=0}^{q-1} (2n - 1 -2i) - sq\qedhere
\end{split}
\end{equation*}
\end{proof}

\begin{corollary}\label{cor:bipartite-density}
An $s$-stack $q$-queue layout of $K_{n, n}$ in the separated case contains at most $2n(q+s) - q^2 - s^2 - sq$ edges.
\end{corollary}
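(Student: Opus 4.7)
The plan is to derive the corollary by a direct computation from the preceding lemma, together with the grid encoding provided by \cref{prp:sep-matrix}. Recall that in the separated setting the lemma establishes that, for any choice of $s$ stack-paths and $q$ queue-paths covering a largest point set of $\grid$, the total number of covered grid points is
\[
\sum_{i=0}^{s-1}(2n - 1 - 2i) + \sum_{i=0}^{q-1}(2n - 1 - 2i) - sq.
\]
Since \cref{prp:sep-matrix} gives a one-to-one correspondence between the edges of $K_{n,n}$ and the points of the $n \times n$ grid $\grid$, and since any $s$-stack $q$-queue layout partitions its edges into $s$ stack-paths and $q$ queue-paths on $\grid$, the number of edges present in such a layout is bounded above by the maximum number of grid points one can cover with $s$ stack-paths and $q$ queue-paths.

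The remaining step is a routine arithmetic simplification of the two telescoping sums. For the stack contribution I would compute
\[
\sum_{i=0}^{s-1}(2n - 1 - 2i) = s(2n - 1) - 2 \cdot \frac{s(s-1)}{2} = 2ns - s^2,
\]
and analogously the queue contribution evaluates to $2nq - q^2$. Plugging these into the bound from the lemma yields
\[
2ns - s^2 + 2nq - q^2 - sq \;=\; 2n(s+q) - s^2 - q^2 - sq,
\]
which is exactly the claimed upper bound on the number of edges.

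Since each ingredient is already in place, there is no real obstacle here; the only thing to be careful about is to invoke \cref{prp:sep-matrix} explicitly to justify identifying edges with grid points and applying the lemma to the partition induced by the layout (rather than to an arbitrary family of paths). Once this identification is made, the corollary follows immediately from the two elementary sum evaluations above.
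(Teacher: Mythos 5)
Your derivation is correct and follows exactly the route the paper intends: identify edges with grid points via \cref{prp:sep-matrix}, bound the edge count by the maximum coverage from the preceding lemma, and simplify the sums to $2n(s+q)-s^2-q^2-sq$. The arithmetic checks out, so nothing further is needed.
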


\noindent Next, we prove that the upper bound of \cref{lem:sep-upper-bound} is tight. 
 
\begin{theorem}\label{lem:sep-bound}
The mixed separated number of $K_{n, n}$ is $\lceil \frac{2n}{3} \rceil$.
\end{theorem}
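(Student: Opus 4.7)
The upper bound $\lceil \tfrac{2n}{3}\rceil$ is already supplied by \cref{lem:sep-upper-bound}, so the plan is to prove the matching lower bound by combining the edge-count bound of \cref{cor:bipartite-density} with a short optimization argument, and then pass to the ceiling.

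The strategy is as follows. Suppose $K_{n,n}$ admits an $s$-stack $q$-queue layout in the separated setting, and set $k = s+q$. Since $K_{n,n}$ has exactly $n^2$ edges, \cref{cor:bipartite-density} forces
\[
n^2 \;\leq\; 2n(s+q) - s^2 - q^2 - sq \;=\; 2nk - (s+q)^2 + sq \;=\; 2nk - k^2 + sq.
\]
For fixed $k$, the quantity $sq$ is maximized (over non-negative reals with $s+q=k$) at $s=q=k/2$, where it equals $k^2/4$. Hence the above inequality implies $n^2 \leq 2nk - \tfrac{3}{4}k^2$, equivalently $3k^2 - 8nk + 4n^2 \leq 0$.

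Solving this quadratic in $k$ yields $k \geq \tfrac{2n}{3}$ (the other root is $2n$, which is ruled out by the trivial bound on $s+q$). Since $k$ is an integer, we conclude $k \geq \lceil \tfrac{2n}{3}\rceil$, matching \cref{lem:sep-upper-bound}.

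There is essentially no obstacle here: the work was done in proving \cref{cor:bipartite-density}, and the remaining step is a one-line optimization plus a quadratic inequality. The only point to be careful about is that $s$ and $q$ are non-negative integers, not reals, but the relaxation to reals only weakens the inequality, which is fine for the lower bound direction.
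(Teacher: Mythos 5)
Your proposal is correct and follows essentially the same route as the paper: invoke \cref{lem:sep-upper-bound} for the upper bound, then combine \cref{cor:bipartite-density} with the observation that the bound is maximized at $s=q=k/2$ (you use $sq\le k^2/4$, the paper sets the derivative in $q$ to zero — the same optimization), arriving at $2nk-\tfrac{3}{4}k^2\ge n^2$ and hence $k\ge\lceil\tfrac{2n}{3}\rceil$. The only cosmetic quibble is that the quadratic inequality $3k^2-8nk+4n^2\le 0$ directly confines $k$ to the interval $[\tfrac{2n}{3},2n]$, so no separate argument is needed to ``rule out'' the root $2n$.
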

\begin{proof}
The upper bound follows from \cref{lem:sep-upper-bound}. 
For the lower bound, denote~by $k$ the total number of pages in an $s$-stack, $q$-queue layout $\mathcal L_n$ of $K_{n, n}$. 
By \cref{cor:bipartite-density}, we obtain
$2n(q+s) - q^2 - s^2 - sq \geq n^2$.
Since $k=s+q$, it follows
$2 n k - q^2 - k^2 + kq \geq n^2$. 
To determine the maximum for the left-hand side of this inequality with respect to $q$, we compute the roots of its first derivative taken over $q$, which is 
$\frac{\partial}{\partial q}\left( 2 n k - q^2 - k^2 + kq\right) = 0$ and yields $k = 2q$,
implying $s=q=\frac{k}{2}$. Thus, 
$ 2 n k - \frac{3k^2}{4} - n^2 \geq 0$ holds.
Hence, $k \geq \lceil \frac{2n}{3} \rceil$, as claimed.
\end{proof}	

\subsection{Non-separated setting}
\label{ssec:non-separated}

We now study the mixed page number of $K_{n, n}$ in the general (i.e., non-separated) setting. For this setting, we are not able to provide an upper bound that is better than $\lfloor \frac{n}{2} \rfloor$, i.e., the queue number of $K_{n, n}$. We conjecture~that this bound is tight, but we do not have a proof for this. However, we are able to provide a lower bound using using the~techniques of the previous section.

\begin{theorem}\label{thm:bipartite-pages}
The mixed page number of $K_{n, n}$ is at least $\lceil \frac{n}{3} \rceil$.
\end{theorem}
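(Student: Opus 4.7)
My plan is to bootstrap the non-separated lower bound from the separated bound already established in \cref{lem:sep-bound}. The idea is that any mixed layout of $K_{n,n}$, no matter how interleaved the two parts are, must embed two disjoint separated sub-layouts on complementary halves of the order, and at least one of them must be ``big''.

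Let $\mathcal{L}$ be an arbitrary $k$-page mixed layout of $K_{n,n}$ with parts $U$ and $V$ of size $n$, and let $\prec$ be its vertex order. I will cut the order at the midpoint: let $L$ be the set of the first $n$ vertices and $R$ the set of the last $n$ vertices in $\prec$. Setting $a=|L\cap U|$, a straightforward counting gives $|L\cap V|=n-a$, $|R\cap U|=n-a$, and $|R\cap V|=a$.

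Now consider the two vertex sets $A_1 = (L\cap U)\cup(R\cap V)$ and $A_2 = (L\cap V)\cup(R\cap U)$. In $A_1$, every $U$-vertex lies in $L$ and every $V$-vertex lies in $R$, so all $U$-vertices precede all $V$-vertices in $\prec$; hence the sub-layout of $\mathcal{L}$ induced on $A_1$ is a separated layout of $K_{a,a}$. Symmetrically, the sub-layout on $A_2$ is a separated layout of $K_{n-a,n-a}$. Both sub-layouts are valid mixed layouts using at most $k$ pages, because the restriction of any stack (queue) to a subset of vertices never creates new crossings or nestings. Applying \cref{lem:sep-bound} to each sub-layout yields
\[
k \;\ge\; \max\!\left(\left\lceil \tfrac{2a}{3}\right\rceil,\, \left\lceil \tfrac{2(n-a)}{3}\right\rceil\right).
\]
Since $\max(a,n-a)\ge \lceil n/2\rceil$, the right-hand side is at least $\lceil 2\lceil n/2\rceil/3\rceil \ge \lceil n/3\rceil$, which closes the argument.

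There is essentially no technical obstacle here: the only fact that needs a one-line justification is that restricting a page to a subset of vertices preserves its stack/queue character, and the rest is a cardinality count plus one arithmetic step on ceilings. The real content of the proof is the structural observation that the midpoint cut simultaneously produces two separated sub-copies of $K_{m,m}$ and $K_{n-m,n-m}$, and that these two sub-layouts must share the pages of $\mathcal{L}$ rather than use disjoint pages, so the lower bound is controlled by the larger of the two rather than by their sum.
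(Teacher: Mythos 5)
Your argument is correct and is essentially the paper's proof: both cut the vertex order after the first $n$ vertices, use a pigeonhole/counting step to extract a separated copy of $K_{m,m}$ with $m\geq\lceil\frac{n}{2}\rceil$ inside the given layout, and then invoke the separated lower bound of \cref{lem:sep-bound} to conclude $k\geq\lceil\frac{n}{3}\rceil$. Your explicit consideration of both complementary sub-copies $K_{a,a}$ and $K_{n-a,n-a}$ and taking the maximum is just a slightly more detailed rendering of the same pigeonhole argument.
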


\begin{proof}
By the pigeonhole principle, in any vertex order of $K_{n, n}$ at least $\lceil \frac{n}{2} \rceil$~of the first $n$ vertices belong to one part. Then at least $\lceil \frac{n}{2} \rceil$ of the remaining $n$ vertices belong to the other part. So, every vertex order induces a $K_{\lceil\frac{n}{2}\rceil,\lceil\frac{n}{2}\rceil}$ in the separated setting, which by \cref{lem:sep-bound} requires at least $\lceil \frac{n}{3} \rceil$~pages.
\end{proof}

\section{Conclusions}
\label{sec:conclusions}

In this work, we shed some light on the mixed page number of some basic graph classes. We want to emphasize the following open questions:
\begin{itemize}[--]
    \item Can the gaps of the bounds on the mixed page numbers of $K_n$ and $K_{n,n}$~be closed? We believe that space for improvement is on the side of the lower bounds.
    \item Do graphs with bounded mixed page number have bounded stack or queue number?
    By a recent result of Dujmovi{\'c} et al.~\cite{DEHMW20}, the answer can be affirmative only if queue number is bounded by stack number for every graph.
    The question is interesting even for separated layouts.
    \item Since planar graphs have stack number $4$~\cite{DBLP:journals/jocg/KaufmannBKPRU20,DBLP:journals/jcss/Yannakakis89} (while their queue number ranges between $4$ and $49$~\cite{DBLP:journals/algorithmica/AlamBGKP20,DBLP:journals/jacm/DujmovicJMMUW20}), we ask whether their mixed page number is $3$ or $4$.
    \item Finally, we suggest to investigate the mixed page number of other classes of graphs such as $k$-planar graphs.
\end{itemize}

\bibliographystyle{splncs03}
\bibliography{general,stacks,queues}

\end{document}